\def\frac#1#2{{\textstyle{#1\over#2}}}
\DeclareSymbolFont{AMSb}{U}{msb}{m}{n}
\DeclareMathSymbol{\Natural}{\mathbin}{AMSb}{"4E}
\DeclareMathSymbol{\Integer}{\mathbin}{AMSb}{"5A}
\DeclareMathSymbol{\Real}{\mathbin}{AMSb}{"52}
\DeclareMathSymbol{\Rational}{\mathbin}{AMSb}{"51}
\DeclareMathSymbol{\Imaginary}{\mathbin}{AMSb}{"49}
\DeclareMathSymbol{\Complex}{\mathbin}{AMSb}{"43} 
\DeclareMathSymbol{\Disk}{\mathbin}{AMSb}{"44} 
\def\bi{\begin{itemize}}
\def\ei{\end{itemize}}
\def\bd{\begin{description}}
\def\ed{\end{description}}
\def\ben{\begin{enumerate}}
\def\een{\end{enumerate}}
\def\calL{{\mathcal L}}
\def\calN{{\mathcal N}}
\def\calO{{\mathcal{O}}}
\def\calS{{\mathcal{S}}}
\def\bar#1{{\overline{#1}}}
\def\hat#1{{\widehat{#1}}}
\newcommand{\indep}{\perp\!\!\!\perp}
\def\pr{{\rm Pr}}
\def\2to{{\ {\buildrel 2\over \longrightarrow}\ }}
\def\Deq{{\ {\buildrel D\over =}\ }}
\def\I1ton{{$I_1,\ldots,I_n$}}
\def\X1ton{{$X_1,\ldots,X_n$}}
\def\Y1ton{{$Y_1,\ldots,Y_n$}}
\def\Z1ton{{$Z_1,\ldots,Z_n$}}
\def\R1ton{{$R_1,\ldots,R_n$}}
\def\e1ton{{$e_1,\ldots,e_n$}}
\def\t1ton{{$t_1,\ldots,t_n$}}
\def\x1ton{{$x_1,\ldots,x_n$}}
\def\y1ton{{$y_1,\ldots,y_n$}}
\def\z1ton{{$z_1,\ldots,z_n$}}
\def\calS{{\mathcal{S}}}
\newtheorem{defn}{Definition}
\newtheorem{theorem}[defn]{Theorem}
\newtheorem{lemma}[defn]{Lemma}
\def\RemTO#1{}
\def\RemET#1{}
\begin{document}
\thispagestyle{empty}
\baselineskip=28pt
\vskip 5mm
\begin{center} {\Large{\bf Bridging Asymptotic Independence and Dependence in Spatial Extremes Using Gaussian Scale Mixtures}}
\end{center}

\baselineskip=12pt


\vskip 5mm

\begin{center}
\large
Rapha\"el Huser$^1$, Thomas Opitz$^2$ and Emeric Thibaud$^3$
\end{center}

\footnotetext[1]{
\baselineskip=10pt Computer, Electrical and Mathematical Sciences and Engineering (CEMSE) Division, King Abdullah University of Science and Technology (KAUST), Thuwal 23955-6900, Saudi Arabia. E-mail: raphael.huser@kaust.edu.sa}
\footnotetext[2]{
\baselineskip=10pt  INRA, UR546 Biostatistics and Spatial Processes, 228, Route de l'A\'erodrome, CS 40509, 84914 Avignon, France. E-mail: thomas.opitz@inra.fr}
\footnotetext[3]{
\baselineskip=10pt Ecole Polytechnique F\'ed\'erale de Lausanne, EPFL-FSB-MATHAA-STAT, Station 8, 1015 Lausanne, Switzerland. E-mail: emeric.thibaud@epfl.ch}

\baselineskip=17pt
\vskip 4mm
\centerline{\today}
\vskip 6mm

\begin{center}
{\large{\bf Abstract}}
\end{center}

Gaussian scale mixtures are constructed as Gaussian processes with a random variance. They have non-Gaussian marginals and can exhibit asymptotic dependence unlike Gaussian processes, which are asymptotically independent except in the case of perfect dependence. In this paper, we study in detail the extremal dependence properties of Gaussian scale mixtures and we unify and extend general results on their joint tail decay rates in both asymptotic dependence and independence cases. Motivated by the analysis of spatial extremes, we propose several flexible yet parsimonious parametric copula models that smoothly interpolate from asymptotic dependence to independence and include the Gaussian dependence as a special case. We show how these new models can be fitted to high threshold exceedances using a censored likelihood approach, and we demonstrate that they provide valuable information about tail characteristics. Our parametric approach outperforms the widely used nonparametric $\chi$ and $\bar\chi$ statistics often used to guide model choice at an exploratory stage by borrowing strength across locations for better estimation of the asymptotic dependence class. We demonstrate the capacity of our methodology by adequately capturing the extremal properties of wind speed data collected in the Pacific Northwest, US.
\baselineskip=16pt

\par\vfill\noindent
{\bf Keywords:} asymptotic dependence and independence; censored likelihood inference; spatial copula; extreme event; random scale model; threshold exceedance.\\

\pagenumbering{arabic}
\baselineskip=22pt


\newpage

\section{Introduction}

\subsection{Motivation}

Gaussian processes have been used extensively in classical spatial statistics  
thanks to their appealing theoretical properties, tractability in high dimensions, explicit conditional distributions and ease of simulation. However, as far as the modeling of extremes is concerned, they have been heavily criticized \citep{Davison.etal:2013} as being unable to capture asymptotic dependence; Gaussian processes are asymptotically independent
, meaning that the dependence strength between events observed at two distinct spatial locations vanishes  as their extremeness increases. Without firm knowledge about the tail properties of the data, it is safer (i.e., more conservative) in terms of risk of joint extremes to assume asymptotic dependence. Using stochastic processes that lack flexibility in the joint tail may lead to severe under- or overestimation of probabilities associated to simultaneous extreme events. This lack of solid theoretical foundations for extrapolation beyond the range of the observations has been a catalyst for extensive research in extreme-value theory (EVT).

Classical EVT provides support for the use of max-stable models for block-maxima (e.g., annual maxima of daily temperature or precipitation), because they are the only possible limits of renormalized pointwise maxima of spatial processes \citep{deHaan.Ferreira:2006}. Their strong asymptotic justification is both a blessing and a curse: max-stability provides a robust modeling framework when few extreme data are available, but this strong assumption may be far from satisfied at subasymptotic levels arising with finite samples. An instructive example is asymptotic independence, where the limiting max-stable distribution is the product of independent margins and cannot capture the potentially strong dependence that remains at extreme subasymptotic levels. In addition to this possibly large gap between the theory and the data, inference for max-stable models is tricky. Full likelihoods can only be calculated in small dimensions, which led to the use of less efficient inference techniques, such as composite likelihoods \citep{Padoan.etal:2010,Huser.Davison:2013a,Castruccio.etal:2016}. \citet{Thibaud.etal:2015} and \citet{Dombry.etal:2016} recently showed how the full likelihood may be approximated in Bayesian or frequentist settings  by integrating out a random partition using Monte Carlo techniques; however, these approaches remain computer intensive in large dimensions.  Alternatively, models for threshold exceedances based on the limiting Poisson \citep{Wadsworth.Tawn:2014,Engelke.etal:2015} or the generalized Pareto \citep{Ferreira.deHaan:2014,Thibaud.Opitz:2015} process are the counterparts of max-stable models for threshold exceedances and have become increasingly popular because they circumvent many of the computational bottlenecks of max-stable processes. However, analogous to max-stable processes, Pareto processes are threshold-stable and thus lack tail flexibility, especially when fitted to asymptotically independent data. Very extreme joint risks tend to be strongly overestimated by these models if the data exhibit decreasing dependence strength at more extreme levels.

Because of the practical limitations of ultimate models, such as max-stable or Pareto processes, it is natural to seek penultimate (i.e., subasymptotic) models for spatial extremes, which combine tail flexibility with computational tractability and have known tail characteristics, in the same vein as penultimate approximations in univariate EVT \citep{Kaufmann.2000}. 
In the case of asymptotic independence, Gaussian models might be reasonable. Alternatively, \citet{Wadsworth.Tawn:2012} proposed inverted max-stable models, which were found to be slightly more flexible than Gaussian processes in some applications \citep{Thibaud.etal:2013,Davison.etal:2013}, but are as difficult to fit as max-stable models. A more complex Bayesian nonparametric copula model was proposed by \citet{Fuentes.etal:2013}. Recently, \citet{op2015} advocated a very specific Gaussian scale mixture model designed for asymptotic independence, constructed from the product of a standard Gaussian process with a random variance following the exponential distribution, yielding Laplace random fields. 
In the case of asymptotic dependence, subasymptotic models were also developed. 
\citet{Wadsworth.Tawn:2012} 
proposed max-mixtures involving inverted max-stable and max-stable models, which add flexibility over the latter at the price of a relatively large number of parameters to be estimated. 
\citet{Morris.etal:2017} proposed a Bayesian space-time skew-$t$ model for threshold exceedances. \citet{Krupskii.etal:2017} proposed factor copula models constructed from Gaussian processes with a random mean, which can capture asymptotic dependence or independence. 

The above spatial models are useful in many respects by introducing more flexibility or by improving computation over max-stable models, but they focus on modeling exclusively either asymptotically independent or asymptotically dependent data. 
In contrast, the pseudo-polar representations of multivariate limit distributions have motivated  \citet{Wadsworth.al.2017} to explore  how more flexible transitions between dependence classes can be achieved in the bivariate case through a common random scaling  applied to a random vector on the unit sphere, the latter being defined from a norm on $\mathbb{R}^2$. Using the pseudo-polar representation of multivariate elliptical distributions such as the multivariate Gaussian, we argue in this paper that a flexible and natural extension of this approach to spatial modeling consists in using the wide class of randomly scaled Gaussian processes, also known as Gaussian scale mixtures, which comprise all infinite-domain processes with finite-dimensional nondegenerate elliptical distributions  \citep{Huang.Cambanis.1979}. The gain in tail flexibility as compared to Gaussian or asymptotic models may also allow fixing lower thresholds   in exceedance-based modeling. Then, the Gaussian correlation structure may capture certain properties of the bulk of the distribution like the range of dependence, while the random scale parameters give separate control over the  joint tail decay rates.


A main theoretical contribution of this paper is that we give detailed results on the joint tail decay rates of Gaussian scale mixture processes under general assumptions on their random scale, defining conditions to capture asymptotic dependence or asymptotic independence. This tail characterization then leads to our main methodological novelty: we propose new spatial subasymptotic copula models, which smoothly bridge the two asymptotic dependence regimes and allow estimating the latter from the data. The model type is usually chosen a priori using variants of the coefficients $\chi$ and $\bar\chi$ \citep{Coles.etal:1999}, whose nonparametric estimation entails large uncertainties and does not yield a spatially coherent model. To illustrate this, Figure~\ref{fig:results2} displays the nonparametric and model-based estimates of the quantities $\chi(u)$ and $\bar\chi(u)$ defined in \eqref{eq:chi_chibar} for an asymptotically independent process. The parametric estimators of $\chi(u)$ and $\bar\chi(u)$ are much more reliable than their nonparametric counterparts, and our approach allows borrowing strength across locations for better estimation and discrimination between the two asymptotic classes. More details are given in~\S\ref{sec:SimulStudy}. We also demonstrate how to exploit the underlying Gaussian structure of our new models to make inference based on a full likelihood with partial censoring and to efficiently perform conditional simulation, which is much more tricky when max-stable models are involved \citep{Dombry.etal:2013}. 

\begin{figure}[t!]
\centering
\includegraphics[width=0.9\linewidth]{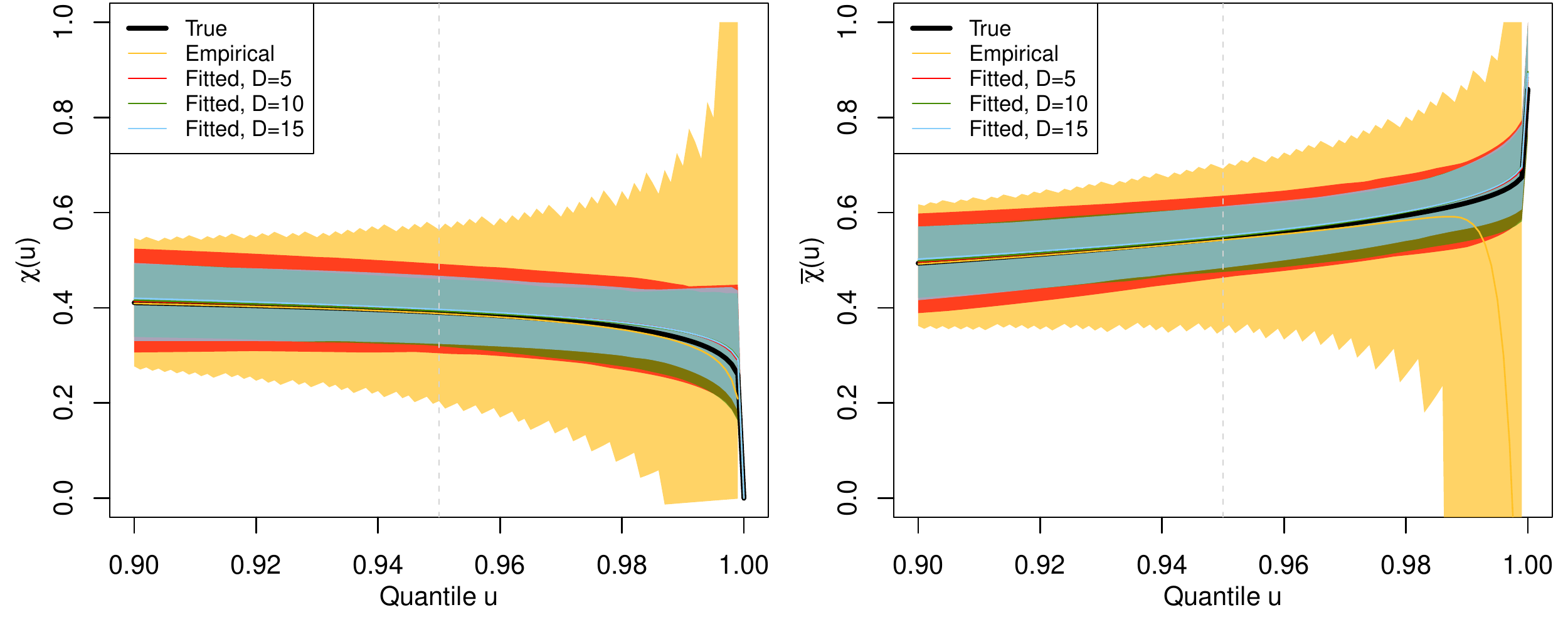}
\caption{Estimated coefficients $\chi(u)$ (left) and $\bar\chi(u)$ (right), $u\in[0.9,1]$, for model \eqref{RandomScale} using \eqref{ModelAD_AI} with  $\beta=\gamma=1$ (asymptotic independent case with $\chi=0$) and correlation function $\rho(\boldsymbol{s}_1,\boldsymbol{s}_2)=\exp\{-\|\boldsymbol{s}_2-\boldsymbol{s}_1\|/\lambda\}$ with $\lambda=1$ for two points at distance $\|\boldsymbol{s}_2-\boldsymbol{s}_1\|=0.5$. Estimation is either nonparametric (yellow) or parametric based on $D=5$ (red), $10$ (green) or $15$ (blue) uniform locations in $[0,1]^2$. The number of replicates is $n=1000$. Solid lines show means of $500$ simulations, while shaded areas are $95\%$ overall confidence envelopes. True curves are in black, and the threshold $v=0.95$ used in \eqref{eq:lik} is the vertical dashed line.}\label{fig:results2}
\end{figure}


Preliminaries about copulas are recalled in \S\ref{subsec:copulas}. The definition and properties of Gaussian scale mixtures are detailed in \S\ref{sec:GaussianScaleMix}. In \S\ref{sec:AsymptAIAD}, we unify and extend theoretical results on joint tail decay rates and detail the conditions leading to asymptotic dependence or asymptotic independence of Gaussian scale mixtures. Parametric modeling is discussed in \S\ref{InterpAIandAD}, and we propose several new models able to transition from one asymptotic regime to the other. Full likelihood inference is discussed in \S\ref{subsec:Lik}, followed by a simulation study in \S\ref{sec:SimulStudy}. We illustrate our modeling approach in \S\ref{sec:Application}  by analyzing wind speed extremes in the Pacific Northwest, US. Section~\ref{sec:Discussion} concludes with some discussion. Proofs are deferred to the Appendix.


\subsection{Copula models}\label{subsec:copulas}

By Sklar's Theorem \citep{Sklar:1959}, any continuous joint distribution $G(\boldsymbol{x})$, $\boldsymbol{x}=(x_1,\ldots,x_D)^T\in{\rm Supp}(G)\subset\Real^D$, with univariate margins $G_1,\ldots,G_D$ may be uniquely represented as
\begin{equation}\label{eq:CopulaDistribution}
G(\boldsymbol{x})=C\{G_1(x_1),\ldots,G_D(x_D)\}\qquad\Longleftrightarrow\qquad C(\boldsymbol{u})=G\{G_1^{-1}(u_1),\ldots,G_D^{-1}(u_D)\},
\end{equation}
where $C(\boldsymbol{u})$, $\boldsymbol{u}=(u_1,\ldots,u_D)^T\in[0,1]^D$, is the copula, also called the dependence function. Alternatively, a copula may be defined as a joint distribution with uniform margins on $[0,1]$. The interesting feature of the representation \eqref{eq:CopulaDistribution} is that it enables separate treatment of marginal distributions and dependence structure. Many copula families may be constructed using \eqref{eq:CopulaDistribution}: the Gaussian copula is obtained by taking $G(\cdot)=\Phi_D(\cdot;\boldsymbol{\Sigma})$, the multivariate standard Gaussian distribution with correlation matrix $\boldsymbol{\Sigma}$; the Student-$t$ copula is obtained by taking $G(\cdot)=T_D(\cdot;\boldsymbol{\Sigma},{\rm Df})$, the multivariate Student-$t$ distribution with correlation matrix $\boldsymbol{\Sigma}$ and ${\rm Df}>0$ degrees of freedom. 
In \S\ref{sec:GaussianScaleMix} we define Gaussian scale mixture models that are elliptic extensions of Gaussian processes from which flexible copula families can be derived.

\section{Gaussian scale mixture processes}\label{sec:GaussianScaleMix}

\subsection{Definition}
To create flexible spatial models, we define a Gaussian scale mixture process (i.e., a Gaussian process with random variance) as follows:
\begin{eqnarray}\label{RandomScale}
X(\boldsymbol{s})=RW(\boldsymbol{s}), \quad \boldsymbol{s}\in\calS\subset\Real^d,
\end{eqnarray}
where $W(\boldsymbol{s})$ is a standard Gaussian process with correlation function $\rho(\boldsymbol{s}_1,\boldsymbol{s}_2)$, and $R\sim F(r)$ is a positive random variable, independent of $W(\boldsymbol{s})$, with distribution $F(r)$ and density $f(r)$ if the latter exists, which we will assume in the remainder of the paper if not stated otherwise.  Conditional on $R$, the random process $X(\boldsymbol{s})$ is Gaussian with zero mean and variance $R^2$. Gaussian processes arise as a special case when $R= r_0$ almost surely for some $r_0>0$. In this paper, we use the copula associated to \eqref{RandomScale} through \eqref{eq:CopulaDistribution} as a model for extremal dependence. 

\subsection{Finite dimensional distributions}

When the process \eqref{RandomScale} is observed at $D$ spatial locations $\boldsymbol{s}_1,\ldots,\boldsymbol{s}_D\in\calS$, we write  $X_j=X(\boldsymbol{s}_j)$ and $W_j=W(\boldsymbol{s}_j)$, $j=1,\ldots,D$, yielding the random vectors $\boldsymbol{X}=(X_1,\ldots,X_D)^T$ and $\boldsymbol{W}=(W_1,\ldots,W_D)^T$. From \eqref{RandomScale}, one has the representation
\begin{eqnarray}\label{RandomScale2}
\boldsymbol{X}=R\boldsymbol{W}, \qquad R\sim F(r) \indep\boldsymbol{W}\sim\calN_D(0,\boldsymbol{\Sigma}),
\end{eqnarray}
where $\boldsymbol{\Sigma}$ is a correlation matrix determined by the spatial configuration of sites. By conditioning on $R$, we deduce that the distribution $G$ and density $g$ of $\boldsymbol{X}$ are
\begin{equation}
G(\boldsymbol{x})=\int_0^\infty \Phi_D(\boldsymbol{x}/r;\boldsymbol{\Sigma}) f(r){\rm d}r,\qquad g(\boldsymbol{x})=\int_0^\infty \phi_D(\boldsymbol{x}/r;\boldsymbol{\Sigma}) r^{-D} f(r){\rm d}r,\qquad \boldsymbol{x}\in\Real^D,\label{CDF_PDF}
\end{equation}
where $\Phi_D(\cdot;\boldsymbol{\Sigma})$ and $\phi_D(\cdot;\boldsymbol{\Sigma})$, respectively, denote the $D$-variate Gaussian distribution and density with zero mean and covariance matrix $\boldsymbol{\Sigma}$. Some non-trivial  choices of the mixing density $f(r)$ lead to a closed-form expression of the density $g(\boldsymbol{x})$, including the Student-$t$, Laplace and slash models \citep{Kotz.al.2004}.  In general, the unidimensional integrals in \eqref{CDF_PDF} can be accurately approximated using numerical integration. 
Marginal distributions $G_k$ and their corresponding densities $g_k$, $k=1,\ldots,D$, are
\begin{equation}\label{CDF_PDF_margins}
G_k(x_k)=\int_0^\infty \Phi(x_k/r) f(r){\rm d}r,\qquad g_k(x_k)=\int_0^\infty \phi(x_k/r) r^{-1} f(r){\rm d}r,\qquad x_k\in\Real,
\end{equation}
where $\Phi(\cdot)=\Phi_1(\cdot;1)$ and $\phi(\cdot)=\phi_1(\cdot;1)$ denote the univariate standard Gaussian distribution and density, respectively. The censored likelihood defined in \S\ref{subsec:Lik} requires the partial derivatives of the copula distribution and hence partial derivatives of the distribution $G$ in \eqref{CDF_PDF}. Let $I\subset\{1,\ldots,D\}$ be a set of indices of cardinality $|I|$ corresponding to components exceeding a high threshold in \S\ref{subsec:Lik}, with its complement in $\{1,\ldots,D\}$ denoted by $I^c$, and let $\boldsymbol{x}=(x_1,\ldots,x_D)^T\in\Real^D$. For any sets of indices $A,B\subset\{1,\ldots,D\}$, let $\boldsymbol{x}_A$ denote the subvector of $\boldsymbol{x}$ obtained by retaining the components indexed by $A$, let $\boldsymbol{\Sigma}_{A;B}$ denote the matrix $\boldsymbol{\Sigma}$ restricted to the rows in $A$ and the columns in $B$, and let $\boldsymbol{\Sigma}_{A\mid B}=\boldsymbol{\Sigma}_{A;A}-\boldsymbol{\Sigma}_{A;B}\boldsymbol{\Sigma}_{B;B}^{-1}\boldsymbol{\Sigma}_{B;A}$. Differentiation of $G$ in \eqref{CDF_PDF} yields
\begin{eqnarray}
G_I(\boldsymbol{x})&:=&{\partial^{|I|} \over\partial \boldsymbol{x}_I}G(\boldsymbol{x})=\int_0^\infty {\partial^{|I|} \over\partial \boldsymbol{x}_I}\Phi_D(\boldsymbol{x}/r;\boldsymbol{\Sigma}) f(r){\rm d}r\nonumber\\
&=&\int_0^\infty \Phi_{|I^c|}\left\{(\boldsymbol{x}_{I^c}-\boldsymbol{\Sigma}_{I^c;I}\boldsymbol{\Sigma}_{I;I}^{-1}\boldsymbol{x}_{I})/r;\boldsymbol{\Sigma}_{I^c\mid I}\right\} \phi_{|I|}(\boldsymbol{x}_{I}/r;\boldsymbol{\Sigma}_{I;I}) r^{-|I|} f(r){\rm d}r,\qquad\label{PartialDeriv}
\end{eqnarray}
which involves the conditional Gaussian distribution of $\boldsymbol{X}_{I^c}$ given $\boldsymbol{X}_I=\boldsymbol{x}_I/r$. The computation of $G$ in \eqref{CDF_PDF} and $G_I$ in \eqref{PartialDeriv} relies on the Gaussian distribution function in dimension $D$ and $|I^c|$ respectively, which can be estimated without bias \citep{Genz.Bretz:2009}.

\subsection{Interpretation as elliptic processes}

Gaussian scale mixtures \eqref{RandomScale2} are elliptically contoured distributions \citep{ca1981}, which may be written in pseudo-polar representation as 
\begin{equation}\label{GaussianScaleElliptic}
\boldsymbol{X} \Deq R^\star\boldsymbol{\Sigma}^{1/2}\boldsymbol{U},
\end{equation}
where $\boldsymbol{\Sigma}=\boldsymbol{\Sigma}^{1/2}\boldsymbol{\Sigma}^{T/2}$ is a covariance matrix and $R^\star\geq 0$ is a positive random variable, called radius, that is independent of a random vector $\boldsymbol{U}=(U_1,\ldots,U_D)^T$ uniformly distributed on the Euclidean unit sphere in $\Real^D$ (i.e., $\|\boldsymbol{U}\|_2=1$). We assume that $\boldsymbol{\Sigma}$ is invertible if not stated otherwise. Elliptical distributions can be viewed as a  random scaling of a uniform random vector residing on the unit sphere defined with respect to the Mahalanobis norm  $\|\boldsymbol{x}\|_{\boldsymbol{\Sigma}}=\sqrt{\boldsymbol{x}^T\boldsymbol{\Sigma}^{-1}\boldsymbol{x}}$. Using elliptic theory \citep{ca1981}, one can equivalently rewrite the multivariate density in \eqref{CDF_PDF} as
\begin{equation*}
g(\boldsymbol{x})=|\boldsymbol{\Sigma}|^{-1/2}h_D\left(\|\boldsymbol{x}\|_{\boldsymbol{\Sigma}}^2\right),\qquad \boldsymbol{x}\in\Real^D,
\end{equation*}
for some function $h_D:[0,\infty)\rightarrow[0,\infty)$. The $D$-variate Gaussian distribution is characterized by $h_D(t)=(2\pi)^{-D/2}\exp(-t/2)$, $t\geq 0$. A simple change of variables gives the density $f_{R^\star}$ of the radial component $R^\star$ in \eqref{GaussianScaleElliptic} as $f_{R^\star}(r) = A_D r^{D-1}h_D(r^2)$, $r>0$, where  $A_D$ denotes the surface area of the unit ball in $\Real^D$ (i.e., $A_1=1$ and $A_D=2\pi^{D/2}\{\Gamma(D/2)\}^{-1}$, $D>1$, with the gamma function $\Gamma(\cdot)$). Because the Gaussian vector $\boldsymbol{W}$ in \eqref{RandomScale2} is itself elliptic, the radial variable of $\boldsymbol{X}=R\boldsymbol{W}$ is $R^\star=RR_{\boldsymbol{W}}$ with $R_{\boldsymbol{W}}$ distributed according to the chi-distribution with $D$ degrees of freedom and has density 
$f_{R_{\boldsymbol{W}}}(r) = 2^{1-D/2} \{\Gamma(D/2)\}^{-1}r^{D-1}\exp(-r^2/2)$,  $r>0$.
One reason that the class of Gaussian scale mixtures provides a practically relevant family of models is that under mild restrictions, it coincides with the large family of stochastic processes possessing elliptic finite-dimensional distributions \citep{Huang.Cambanis.1979}.

\subsection{Conditional distributions and simulation algorithm}\label{sec:condsimalgo}
Conditional simulation is crucial for spatial prediction and estimation of complex spatial functionals. We demonstrate how this can be efficiently performed for spatial or multivariate models of the form \eqref{RandomScale} or \eqref{RandomScale2}, respectively. For $\boldsymbol{X}=(\boldsymbol{X}_1^T,\boldsymbol{X}_2^T)^T=R(\boldsymbol{W}_1^T,\boldsymbol{W}_2^T)^T$ a $D$-dimensional Gaussian scale mixture partitioned into subvectors $\boldsymbol{X}_1$ and $\boldsymbol{X}_2$ of dimensions $D_1\in \{1,\ldots,D-1\}$ and $D_2=D-D_1$, respectively, we derive the conditional distributions of $\boldsymbol{X}_2$ given $\boldsymbol{X}_1$  and of the latent variable $R$ given $\boldsymbol{X}_1$. We then use these results to propose a conditional simulation algorithm. We let $\boldsymbol{\Sigma}_{i;j}$, $i,j\in\{1,2\}$, denote the corresponding blocks of the covariance matrix $\boldsymbol{\Sigma}$ of $\boldsymbol{W}=(\boldsymbol{W}_1^T,\boldsymbol{W}_2^T)^T$ and $\boldsymbol{\Sigma}_{i\mid j}=\boldsymbol{\Sigma}_{i;i}-\boldsymbol{\Sigma}_{i;j}\boldsymbol{\Sigma}_{j;j}^{-1}\boldsymbol{\Sigma}_{j;i}$.

\begin{theorem}[Conditional distributions]
\label{theor:cond}
The conditional distribution of $\boldsymbol{X}_2$ given $\boldsymbol{X}_1= \boldsymbol{x}_1$ is elliptic with density
\begin{equation}\label{condsim1}
f_{ \boldsymbol{X}_2\mid  \boldsymbol{X}_1= \boldsymbol{x}_1}(\boldsymbol{x}_2) = c_0^{-1}|\boldsymbol{\Sigma}_{2\mid 1}|^{-1/2}h_D\left\{ (\boldsymbol{x}_2 - \boldsymbol{\mu}_{2\mid 1})^T \boldsymbol{\Sigma}_{2\mid 1}^{-1}  (\boldsymbol{x}_2 - \boldsymbol{\mu}_{2\mid 1}) + c_1\right\},\qquad \boldsymbol{x}_2\in\Real^{D_2},
\end{equation}
where $\boldsymbol{\mu}_{2\mid 1} = \boldsymbol{\Sigma}_{2;1}\boldsymbol{\Sigma}_{1;1}^{-1} \boldsymbol{x}_1$, $c_0 =A_{D_2}\int_0^\infty h_D(r^2+c_1)r^{D_2-1} \mathrm{d}r$, $c_1= \boldsymbol{x}_1^T\boldsymbol{\Sigma}_{1;1}^{-1} \boldsymbol{x}_1$ and
\begin{equation*}
h_D(t) =A_D^{-1}t^{(1-D)/2}f_{RR_{\boldsymbol{W}}}(\sqrt{r}), \qquad f_{RR_{\boldsymbol{W}}}(r) =\int_0^\infty  F(s^{-1})\left\{ f_{R_{\boldsymbol{W}}}(rs) +f'_{R_{\boldsymbol{W}}}(rs)rs\right\}\mathrm{d}s.
\end{equation*}
It has pseudo-polar representation $\boldsymbol{\mu}_{2\mid 1}+R_{2\mid 1}\boldsymbol{\Sigma}_{2\mid 1}^{1/2} \boldsymbol{U}$ with $\boldsymbol{\Sigma}_{2\mid 1}=\boldsymbol{\Sigma}_{2\mid 1}^{1/2}\boldsymbol{\Sigma}_{2\mid 1}^{T/2}$, $\|\boldsymbol{U}\|_2=1$  and radius $R_{2\mid 1}$ whose density is $f_{R_{2\mid 1}}(r) = A_{D_2}\,c_0^{-1} r^{D_2-1} h_D(r^2+c_1)$, $r > 0$.
The conditional density of $R$ given $\boldsymbol{X}_1=\boldsymbol{x}_1\in\Real^{D_1}$, with $1\leq D_1\leq D$, is
\begin{equation}\label{condsim2}
f_{R\mid  \boldsymbol{X}_1= \boldsymbol{x}_1}(r) = r^{-D_1}f(r)\phi_{D_1}(\boldsymbol{x}_1/r; \boldsymbol{\Sigma}_{1;1})/g(\boldsymbol{x}_1),\qquad r\geq0.
\end{equation}
\end{theorem}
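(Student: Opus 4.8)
The plan is to prove the three assertions in order of increasing effort, each feeding into the next. I would begin with the conditional density \eqref{condsim2} of $R$, which is a one-line application of Bayes' rule: conditionally on $R=r$, the representation \eqref{RandomScale2} gives $\boldsymbol{X}_1=r\boldsymbol{W}_1$ with $\boldsymbol{W}_1\sim\calN_{D_1}(0,\boldsymbol{\Sigma}_{1;1})$, so $\boldsymbol{X}_1\mid R=r$ has density $r^{-D_1}\phi_{D_1}(\boldsymbol{x}_1/r;\boldsymbol{\Sigma}_{1;1})$; multiplying by $f(r)$ gives the joint density of $(\boldsymbol{X}_1,R)$, and dividing by the marginal density of $\boldsymbol{X}_1$, which is precisely $g(\boldsymbol{x}_1)=\int_0^\infty r^{-D_1}\phi_{D_1}(\boldsymbol{x}_1/r;\boldsymbol{\Sigma}_{1;1})f(r)\,\mathrm{d}r$ by \eqref{CDF_PDF} taken in dimension $D_1$, yields \eqref{condsim2}.

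For the conditional law of $\boldsymbol{X}_2$ given $\boldsymbol{X}_1=\boldsymbol{x}_1$ I would start from the elliptic density $g(\boldsymbol{x})=|\boldsymbol{\Sigma}|^{-1/2}h_D(\|\boldsymbol{x}\|_{\boldsymbol{\Sigma}}^2)$ recalled in \S\ref{sec:GaussianScaleMix}, together with the Schur-complement identities $\boldsymbol{x}^T\boldsymbol{\Sigma}^{-1}\boldsymbol{x}=(\boldsymbol{x}_2-\boldsymbol{\mu}_{2\mid1})^T\boldsymbol{\Sigma}_{2\mid1}^{-1}(\boldsymbol{x}_2-\boldsymbol{\mu}_{2\mid1})+\boldsymbol{x}_1^T\boldsymbol{\Sigma}_{1;1}^{-1}\boldsymbol{x}_1$ and $|\boldsymbol{\Sigma}|=|\boldsymbol{\Sigma}_{1;1}|\,|\boldsymbol{\Sigma}_{2\mid1}|$. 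These give $\|\boldsymbol{x}\|_{\boldsymbol{\Sigma}}^2=\|\boldsymbol{\Sigma}_{2\mid1}^{-1/2}(\boldsymbol{x}_2-\boldsymbol{\mu}_{2\mid1})\|_2^2+c_1$; integrating $g$ over $\boldsymbol{x}_2\in\Real^{D_2}$ via the substitution $\boldsymbol{y}=\boldsymbol{\Sigma}_{2\mid1}^{-1/2}(\boldsymbol{x}_2-\boldsymbol{\mu}_{2\mid1})$ followed by polar coordinates in $\Real^{D_2}$ (which supplies the factor $A_{D_2}r^{D_2-1}$) identifies the marginal density of $\boldsymbol{X}_1$ as $|\boldsymbol{\Sigma}_{1;1}|^{-1/2}c_0$. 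Dividing $g(\boldsymbol{x})$ by this quantity and using the determinant factorization yields \eqref{condsim1} and simultaneously displays it as an elliptic density with location $\boldsymbol{\mu}_{2\mid1}$, scale $\boldsymbol{\Sigma}_{2\mid1}$ and generator $t\mapsto c_0^{-1}h_D(t+c_1)$. An equivalent and perhaps more transparent route is to first condition on $(\boldsymbol{X}_1,R)=(\boldsymbol{x}_1,r)$ — where $\boldsymbol{X}_2\sim\calN_{D_2}(\boldsymbol{\mu}_{2\mid1},r^2\boldsymbol{\Sigma}_{2\mid1})$, the mean being free of $r$ since the factors $r$ from $\boldsymbol{X}_2=r\boldsymbol{W}_2$ and $\boldsymbol{W}_1=\boldsymbol{x}_1/r$ cancel — and then mix over the law of $R\mid\boldsymbol{X}_1=\boldsymbol{x}_1$ found in the previous step. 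In either case the pseudo-polar representation and the radial density $f_{R_{2\mid1}}(r)=A_{D_2}c_0^{-1}r^{D_2-1}h_D(r^2+c_1)$ follow by applying the generic rule, recalled in \S\ref{sec:GaussianScaleMix}, that an elliptic density on $\Real^{D_2}$ with generator $\tilde h$ has radius with density $A_{D_2}r^{D_2-1}\tilde h(r^2)$, here with $\tilde h=c_0^{-1}h_D(\cdot+c_1)$.

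To re-express $h_D$ through $f_{RR_{\boldsymbol{W}}}$, I would invert $f_{R^\star}(r)=A_Dr^{D-1}h_D(r^2)$ for the radius $R^\star=RR_{\boldsymbol{W}}$ of $\boldsymbol{X}$, giving $h_D(t)=A_D^{-1}t^{(1-D)/2}f_{RR_{\boldsymbol{W}}}(\sqrt t)$, and then compute the density of the product $RR_{\boldsymbol{W}}$ by conditioning on $R_{\boldsymbol{W}}$: $F_{R^\star}(r)=\int_0^\infty F(r/t)f_{R_{\boldsymbol{W}}}(t)\,\mathrm{d}t$; substituting $t=rs$ gives $F_{R^\star}(r)=\int_0^\infty F(1/s)\,r\,f_{R_{\boldsymbol{W}}}(rs)\,\mathrm{d}s$, and differentiating in $r$ under the integral sign, using $\partial_r\{r f_{R_{\boldsymbol{W}}}(rs)\}=f_{R_{\boldsymbol{W}}}(rs)+rsf'_{R_{\boldsymbol{W}}}(rs)$, reproduces the stated formula.

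The calculations are essentially routine, so the parts asking for genuine care are the block-matrix bookkeeping — the two Schur identities and the polar change of variables that pins down $c_0$ — and, more substantively, the justification of the interchanges of differentiation and integration, both here and implicitly in \eqref{PartialDeriv}. This last point I would settle by dominated convergence: $f_{R_{\boldsymbol{W}}}$ and $f'_{R_{\boldsymbol{W}}}$ are bounded with Gaussian-type tails, so on each compact interval of $r$ the map $r\mapsto r f_{R_{\boldsymbol{W}}}(rs)$ and its $r$-derivative are dominated, uniformly in $r$, by an $F$-integrable function of $s$, which legitimises differentiation under the integral and, along the way, the finiteness of $c_0$ and of $g(\boldsymbol{x}_1)$.
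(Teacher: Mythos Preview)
Your proposal is correct and follows essentially the same route as the paper: the paper derives \eqref{condsim2} by the same change of variables / Bayes argument, obtains \eqref{condsim1} and the pseudo-polar representation by invoking standard conditional-elliptic theory \citep[Corollary~5 and \S4]{ca1981} (precisely the Schur-complement computation you spell out), and computes $f_{RR_{\boldsymbol{W}}}$ by the identical substitution-then-differentiate manoeuvre on $\int_0^\infty F(r/s)f_{R_{\boldsymbol{W}}}(s)\,\mathrm{d}s$. Your version is more detailed than the paper's---in particular the explicit Schur identities, the alternative mixing-over-$R$ derivation, and the dominated-convergence justification for differentiating under the integral are additions, not departures.
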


Simulation of $\boldsymbol{X}_2$ conditional on $\boldsymbol{X}_1= \boldsymbol{x}_1$ can be done either by directly calculating and simulating the elements of the pseudo-polar representation $\boldsymbol{\mu}_{2\mid 1}+R_{2\mid 1}\boldsymbol{\Sigma}_{2\mid 1}^{1/2} \boldsymbol{U}$ or by exploiting the latent Gaussian structure in a two-step procedure: to simulate $R\boldsymbol{W}_2$ conditional on $R\boldsymbol{W}_1=\boldsymbol{x}_1$, we first generate a realization $\tilde{r}$ of the conditional scale variable $\tilde{R}$ according to its density $f_{R\mid \boldsymbol{X}_1= \boldsymbol{x}_1}$ in \eqref{condsim2}, and then we sample a realization $\tilde{\boldsymbol{w}}_2$ of  $\boldsymbol{W}_2$ conditional on $\tilde{R}=\tilde{r}$ and $\boldsymbol{X}_1=\boldsymbol{x}_1$, i.e., we sample $\tilde{\boldsymbol{w}}_2$  according to the conditional Gaussian distribution $\boldsymbol{W}_2 \mid  \boldsymbol{W}_1 =  \boldsymbol{x}_1/\tilde{r}$ with mean $\boldsymbol{\mu}_{2\mid 1}/\tilde{r}$ and covariance matrix $\boldsymbol{\Sigma}_{2\mid 1}$. Then,  $\tilde{r}\tilde{\boldsymbol{w}}_2$ is a realization of the conditional vector $\boldsymbol{X}_2$ given $\boldsymbol{X}_1=\boldsymbol{x}_1$.

\begin{figure}[t!]
\centering
\includegraphics[width=\linewidth]{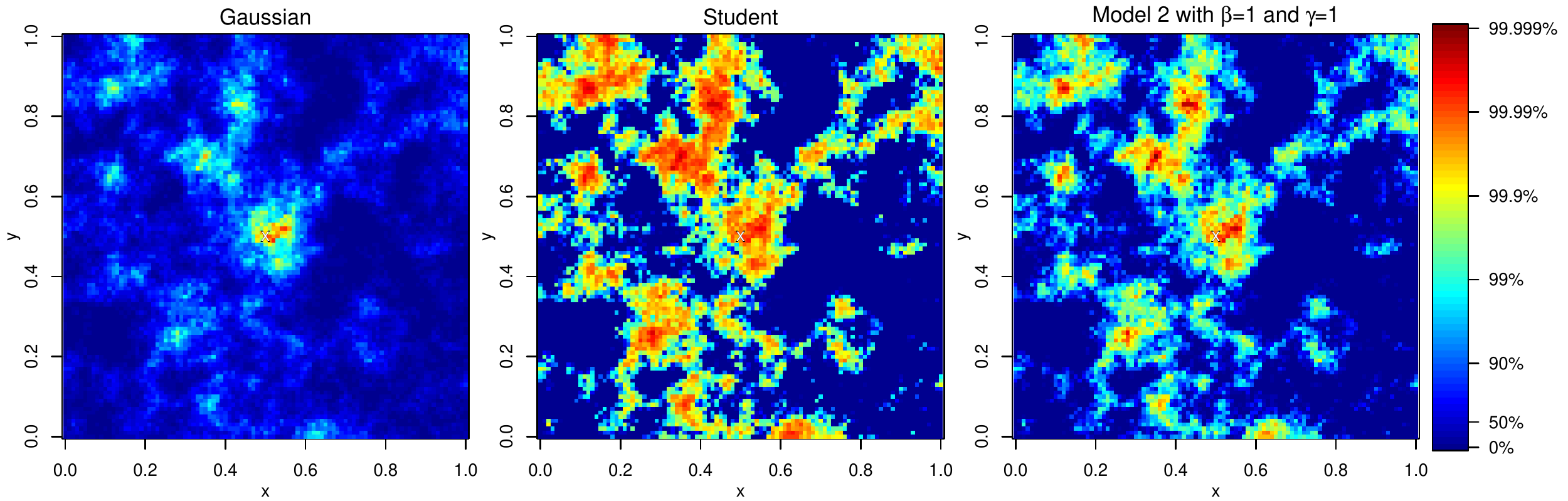}
\caption{Conditional simulations in $[0,1]^2$ from an (asymptotically independent) Gaussian copula (left), an (asymptotically dependent) Student-$t$ copula with $3$ degrees of freedom (middle) and one of our new proposed Gaussian scale mixtures defined in \eqref{ModelAD_AI} with $\beta=\gamma=1$ leading to asymptotic independence (right), displayed on exponential margins. The underlying Gaussian field has exponential correlation function $\rho(\boldsymbol{s}_1,\boldsymbol{s}_2)=\exp(-\|\boldsymbol{s}_1-\boldsymbol{s}_2\|/0.1)$. Simulations are done conditionally on the central grid cell at $(0.5,0.5)$ being equal to the $99.99\%$-quantile. The conditional simulation of the scale variable $\tilde{R}$ is based on a Metropolis-Hastings algorithm with multiplicative random walk, and the three conditional simulations of the Gaussian component $\tilde{\boldsymbol{W}}$ use the same random seed.}
\label{fig:condsims}
\end{figure}

To illustrate our conditional simulation algorithm based on the latent Gaussian structure, Figure~\ref{fig:condsims} displays realizations of three Gaussian scale mixture models on $[0,1]^2$, each of them conditioned on being equal to the $99.99\%$-quantile at the point $(0.5,0.5)$. By using the same random seed for the conditional Gaussian sample, the impact of choosing different random scale distributions emerges clearly. Large contiguous areas with high values arise for the asymptotically dependent Student-$t$ model, while the Gaussian one generates spatially isolated and highly localized peaks with highly variable peak size.  The third simulation exemplifies our new proposed model \eqref{ModelAD_AI} providing a continuous transition from Gaussian to asymptotic dependence, here with parameters leading to asymptotic independence; more details are provided in \S\ref{InterpAIandAD}. For a moderate number of conditioning values (here $D_1=1$ and $D_2=10200$), simulation times are largely dominated by the conditional Gaussian simulation, while the computational effort for simulating the conditional random scale variable is negligible.
 
\section{Tail behavior of Gaussian scale mixture models}\label{sec:Asympt}
\subsection{Asymptotic independence and dependence}\label{sec:AsymptAIAD}
We now characterize the bivariate joint tail decay of Gaussian scale mixtures with two commonly used coefficients $\chi$ and $\bar{\chi}$ \citep{Coles.etal:1999} defined as $\chi:=\lim_{u\to 1} \chi(u)$ and $\bar\chi:=\lim_{u\to 1} \bar\chi(u)$, where 
\begin{equation}\label{eq:chi_chibar}
\chi(u)=2- {\log C(u,u)\over \log(u)},\qquad \bar\chi(u)={2\log (1-u)\over \log \bar C(u,u)}-1,
\end{equation}
and $C(u_1,u_2)$ is the bivariate copula stemming from the pair of variables $(X_1,X_2)^T$ through~\eqref{eq:CopulaDistribution} with survival copula $\bar C(u_1,u_2)=1-u_1-u_2+C(u_1,u_2)$. When variables $X_1$ and $X_2$ are asymptotically dependent, then $\chi>0$ and $\bar{\chi}=1$. By contrast, asymptotic independence of $X_1$ and $X_2$ gives $\bar{\chi}\in [-1,1]$ and $\chi=0$. In practice, the pair of coefficients $\{\chi(u),\bar{\chi}(u)\}$ evaluated for increasingly large thresholds $u\in[0,1]$ may help assess the asymptotic dependence class \citep{Coles.etal:1999}; see  Figure~\ref{fig:results2} for illustration.

The following new results on joint tail decay rates rely on specific tail dependence characterizations of elliptic distributions \citep[see the seminal monograph of][]{Berman:1992}. To understand the asymptotic dependence structure entailed by general Gaussian scale mixture models \eqref{RandomScale2}, we fix $D=2$ and study the asymptotic properties of $\boldsymbol{X}=(X_1,X_2)^T=R(W_1,W_2)^T$, where the Gaussian vector $\boldsymbol{W}=(W_1,W_2)^T$ has correlation $\rho\in (-1,1)$, excluding the degenerate cases $\rho\in\{-1,1\}$.  We first provide general results covering asymptotic dependence and asymptotic independence and then discuss useful parametric examples of distributions for $R$ in \ref{InterpAIandAD}.  Interestingly, asymptotic independence is obtained for the wide class of Weibull-like tail decay in $R$ (see Theorem~\ref{theor:tailindep}), whereas asymptotic dependence occurs when $R$ is regularly varying at infinity, i.e., when $R$ has Pareto-like tail behavior (see Theorem~\ref{theor:taildep}). 

\begin{theorem}[Asymptotic independence for Gaussian scale mixtures] \label{theor:tailindep}\ 
Suppose that 
\begin{equation}\label{eq:AIV}
\pr(R\geq r) = 1-F(r) \sim \alpha r^{\gamma} \exp(-\delta r^\beta), \qquad r\to\infty,
\end{equation}
for some constants $\alpha>0$, $\beta>0$, $\gamma\in \Real$ and   $\delta>0$. Then $\chi = 0$ and 
\begin{equation*}
\bar{\chi}= 2 \left\{(1+\rho)/ 2\right\}^{\beta/(\beta+2)} - 1.
\end{equation*}
The joint tail can be written as\begin{equation}\label{eq:AIbv}
\bar C\{1-1/x,1-1/x\} = \calL(x)x^{-1/\eta}, \qquad  x\rightarrow \infty,
\end{equation}
where $\eta=(1+\bar{\chi})/2$ is the coefficient of tail dependence \citep{Ledford.Tawn:1996}, $\calL(x)\sim K \log(x)^{(1-1/\eta){2\gamma+\beta\over2\beta}+1/(2\eta)-1}$ is a slowly varying function as $x\rightarrow\infty$ and $K$ is a positive constant depending on  $\alpha$, $\beta$, $\gamma$ and $\delta$; see the proof in the appendix.
\end{theorem}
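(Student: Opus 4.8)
The plan is to reduce the bivariate and univariate tails of $\boldsymbol X=R\boldsymbol W$ to Gaussian tails averaged against the law of $R$, and then run a Laplace (steepest-descent) asymptotic. Conditioning on $R$ in \eqref{RandomScale2} with $D=2$ gives, for $z\to\infty$,
\begin{gather*}
\pr(X_1>z,X_2>z)=\int_0^\infty\bar\Phi_2(z/r,z/r;\rho)\,\mathrm{d}F(r),\\
\pr(X_1>z)=\int_0^\infty\bar\Phi(z/r)\,\mathrm{d}F(r),
\end{gather*}
where $\bar\Phi_2(\cdot,\cdot;\rho)$ and $\bar\Phi$ are the bivariate and univariate Gaussian survival functions. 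First I would integrate by parts to replace $\mathrm{d}F$ by $\bar F=1-F$ (the boundary terms vanish at $0$ and $\infty$) and use the exact identities $\tfrac{\mathrm{d}}{\mathrm{d}r}\bar\Phi(z/r)=\tfrac{z}{r^2}\phi(z/r)$ and $\tfrac{\mathrm{d}}{\mathrm{d}r}\bar\Phi_2(z/r,z/r;\rho)=\tfrac{2z}{r^2}\phi(z/r)\,\bar\Phi\bigl(\tfrac{z}{r}\sqrt{(1-\rho)/(1+\rho)}\bigr)$; in this way one never differentiates the assumed tail \eqref{eq:AIV} of $F$, but only substitutes it together with the Mills-ratio asymptotic $\bar\Phi(w)\sim w^{-1}\phi(w)$. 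Each resulting integral is then, up to a positive constant and a slowly varying prefactor, of the form $\int_0^\infty r^{p}\exp\{-\delta r^\beta-az^2/r^2\}\,\mathrm{d}r$, with $a=1/(1+\rho)$ in the joint case and $a=1/2$ in the marginal case, and with the exponents $p$ differing only by elementary bookkeeping.

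Both integrals are dominated by a neighbourhood of the maximiser $r_\star=(2az^2/\delta\beta)^{1/(\beta+2)}$ of $\psi(r)=-\delta r^\beta-az^2/r^2$, and since $r_\star\asymp z^{2/(\beta+2)}\to\infty$ the tail \eqref{eq:AIV} and the Mills ratio are both valid on that neighbourhood; a Laplace approximation collecting the factors $r_\star^{\,p}$ and $|\psi''(r_\star)|^{-1/2}$ then gives
\begin{gather*}
\pr(X_1>z,X_2>z)\sim C_{\mathrm j}\,z^{(2\gamma-\beta)/(\beta+2)}\exp\{-\kappa(a)z^{2\beta/(\beta+2)}\},\\
\pr(X_1>z)\sim C_{\mathrm m}\,z^{2\gamma/(\beta+2)}\exp\{-\kappa(1/2)z^{2\beta/(\beta+2)}\},
\end{gather*}
with $a=1/(1+\rho)$. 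The algebraic heart of the proof is to simplify $\kappa(a)=\delta r_\star^\beta+az^2/r_\star^2$ (evaluated at $z^{2\beta/(\beta+2)}=1$): using $2az^2=\delta\beta r_\star^{\beta+2}$ the two terms collapse to $\kappa(a)=C_0(2a)^{\beta/(\beta+2)}$ with $C_0=\tfrac{\beta+2}{2}\,\delta^{2/(\beta+2)}\beta^{-\beta/(\beta+2)}$ independent of $a$, so that $\kappa(a)/\kappa(1/2)=(2a)^{\beta/(\beta+2)}=\{2/(1+\rho)\}^{\beta/(\beta+2)}$.

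To reach the copula scale I would invert the marginal: defining $z=z(x)$ by $\pr(X_1>z)=1/x$ yields $\kappa(1/2)z^{2\beta/(\beta+2)}=\log x+\log C_{\mathrm m}+\tfrac{2\gamma}{\beta+2}\log z$, hence $z\sim\{\kappa(1/2)^{-1}\log x\}^{(\beta+2)/(2\beta)}$ and $\log z\sim\tfrac{\beta+2}{2\beta}\log\log x$. Writing $1/\eta:=\kappa(a)/\kappa(1/2)$ and $\exp\{-\kappa(a)z^{2\beta/(\beta+2)}\}=C_{\mathrm m}^{-1/\eta}x^{-1/\eta}z^{-2\gamma/\{(\beta+2)\eta\}}$ and substituting into the joint tail gives $\bar C(1-1/x,1-1/x)=\pr\{X_1>z(x),X_2>z(x)\}\sim\calL(x)x^{-1/\eta}$, so that $\eta=\{(1+\rho)/2\}^{\beta/(\beta+2)}$ and, by the definition of $\bar\chi(u)$ in \eqref{eq:chi_chibar}, $\bar\chi=2\eta-1=2\{(1+\rho)/2\}^{\beta/(\beta+2)}-1$. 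Collecting the leftover powers of $z$, namely $z^{(2\gamma-\beta)/(\beta+2)}$ from the joint tail and $z^{-2\gamma/\{(\beta+2)\eta\}}$ from the exponential, and turning $z$ into $\log x$ through the exponent $(\beta+2)/(2\beta)$, gives $\calL(x)\sim K(\log x)^{\gamma(1-1/\eta)/\beta-1/2}$, which rearranges to the stated exponent $(1-1/\eta)\tfrac{2\gamma+\beta}{2\beta}+\tfrac{1}{2\eta}-1$. Finally $\chi=0$ because $\rho<1$ makes $1/\eta>1$, so $\bar C(1-1/x,1-1/x)=o(1/x)$ and $\log C(u,u)/\log u\to2$ as $u\to1$.

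The step I expect to be hardest is making the Laplace approximation with a diverging maximiser fully rigorous: one must verify that \eqref{eq:AIV} and the Mills ratio may be substituted uniformly on a shrinking neighbourhood of $r_\star$, that the contribution of $r$ away from $r_\star$ is negligible (straightforward, since $\bar F$ is super-exponentially small for large $r$ and $\exp(-az^2/r^2)$ kills small $r$), and that the sub-polynomial factors are tracked accurately, as these fix the precise slowly varying function $\calL$. Alternatively, this can be routed through known tail expansions for elliptical vectors \citep{Berman:1992} applied to the radius $R^\star=RR_{\boldsymbol W}$ of \eqref{GaussianScaleElliptic}, whose Weibull-type tail follows from \eqref{eq:AIV} by the same Laplace computation, with $\beta$ effectively replaced by $2\beta/(\beta+2)$.
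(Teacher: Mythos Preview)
Your proposal is correct, and the key computations check out: the exponential rate $\kappa(a)=C_0(2a)^{\beta/(\beta+2)}$ with $a=1/(1+\rho)$ versus $a=1/2$ gives exactly $1/\eta=\{2/(1+\rho)\}^{\beta/(\beta+2)}$, the polynomial prefactors $z^{(2\gamma-\beta)/(\beta+2)}$ and $z^{2\gamma/(\beta+2)}$ are right, and your log-power exponent $\gamma(1-1/\eta)/\beta-1/2$ indeed equals the stated $(1-1/\eta)(2\gamma+\beta)/(2\beta)+1/(2\eta)-1$. The Mills ratio is applicable since $z/r_\star\asymp z^{\beta/(\beta+2)}\to\infty$.

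However, the route is genuinely different from the paper's. The paper does not run a Laplace argument on the Gaussian mixture integrals at all. Instead it passes to the elliptic radius $R^\star=RR_{\boldsymbol W}$ of \eqref{GaussianScaleElliptic}, observes that $R_{\boldsymbol W}$ (chi with two degrees of freedom) has a Weibull-type tail with $\beta_W=2$, and invokes two off-the-shelf lemmas: a product result of \citet{ar2011} giving $R^\star$ a Weibull-type tail with index $\beta^\star=2\beta/(\beta+2)$ and explicit constants $\alpha^\star,\gamma^\star,\delta^\star$; and a joint-tail result of \citet{ha2010a} for elliptical vectors whose radius lies in the Gumbel domain, which delivers $\eta=\{(1+\rho)/2\}^{\beta^\star/2}$ and the slowly varying factor directly. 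In other words, the ``alternative'' you sketch in your last sentence \emph{is} the paper's proof. What your direct approach buys is self-containment: you never need the product-tail lemma or the elliptical joint-tail lemma, and the appearance of the exponent $\beta/(\beta+2)$ is transparently explained by the competition between $\exp(-\delta r^\beta)$ and $\exp(-az^2/r^2)$. What the paper's approach buys is brevity and a clean separation of concerns (the rigorous Laplace work is outsourced to the cited lemmas), plus immediate access to the exact constant $K$ in terms of $\alpha^\star,\beta^\star,\gamma^\star,\delta^\star$ without tracking sub-polynomial factors by hand. Your stated worry about making the Laplace step uniform near the diverging maximiser is precisely what those lemmas handle.
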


\begin{theorem}[Asymptotic dependence for Gaussian scale mixtures] \label{theor:taildep}\ 
Suppose that $R$ is regularly varying at infinity, that is, 
\begin{equation}\label{eq:ADV}
{\pr(R\geq tr)\over \pr(R \geq t)}  = {1-F(tr)\over 1-F(t)} =  r^{-\gamma}, \quad  r > 0, \quad t\rightarrow \infty, 
\end{equation}
for some $\gamma>0$. 
Then $\bar{\chi} = 1$ and 
\begin{equation}\label{eq:ADchi}
\chi =  2\left[1-T\left\{(1+\gamma)^{1/2}(1-\rho)(1-\rho^2)^{-1/2};\gamma+1\right\}\right],
\end{equation}
where $T(\cdot;{\rm Df})=T_1(\cdot;1,{\rm Df})$ is the univariate Student-$t$ distribution with ${\rm Df}>0$ degrees of freedom. 
The joint tail can be written as
\begin{equation}\label{eq:ADbv}
\bar C(1-1/x,1-1/x) \sim \chi\times \pr\left\{G_1(X_1)>1-1/x\right\} \sim \chi/x, \quad x\rightarrow \infty.
\end{equation}
\end{theorem}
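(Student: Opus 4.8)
The plan is to express $\chi$ as a limiting ratio of tail probabilities at a common high level, to condition on the scale variable $R$, and to let the regular variation \eqref{eq:ADV} of $\bar F=1-F$ select the dominant contribution. Since the two margins of $\boldsymbol X$ coincide, $\chi=\lim_{x\to\infty}\pr(X_1>x,X_2>x)/\pr(X_1>x)$, so it suffices to obtain sharp asymptotics for both probabilities. As $R>0$, writing $\psi(v)=\pr(W_1>v,W_2>v)$ gives $\pr(X_1>x,X_2>x)=\E\{\psi(x/R)\}$ and $\pr(X_1>x)=\E\{\bar\Phi(x/R)\}$ with $\bar\Phi=1-\Phi$.

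First I would recast these expectations through $\bar F$. The variable $V_x=x/R$ has distribution function $v\mapsto\bar F(x/v)$, so integrating by parts (using that $\psi$ and $\bar\Phi$ decrease to $0$) gives $\E\{\psi(x/R)\}=\int_0^\infty\bar F(x/v)\,|\psi'(v)|\,{\rm d}v$ and $\E\{\bar\Phi(x/R)\}=\int_0^\infty\bar F(x/v)\,\phi(v)\,{\rm d}v$, where conditioning $W_2$ on $W_1=v$ in the bivariate normal yields the identity $\psi'(v)=-2\phi(v)\,\bar\Phi(cv)$ with $c=\{(1-\rho)/(1+\rho)\}^{1/2}$. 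By \eqref{eq:ADV}, $\bar F(x/v)/\bar F(x)\to v^{\gamma}$ for each fixed $v>0$; a Potter-bound argument then transfers this to the integrals, the key point being that $|\psi'|$ and $\phi$ decay super-exponentially while $\bar F$ decays only polynomially, so the contribution of small $R$ is negligible against $\bar F(x)$, while integrability near $v=0$ holds because $\gamma>0$. Hence $\pr(X_1>x,X_2>x)\sim\bar F(x)\int_0^\infty v^\gamma|\psi'(v)|\,{\rm d}v$ and $\pr(X_1>x)\sim\bar F(x)\int_0^\infty v^\gamma\phi(v)\,{\rm d}v$, so that
\[
\chi=\frac{2\int_0^\infty v^{\gamma}\phi(v)\,\bar\Phi(cv)\,{\rm d}v}{\int_0^\infty v^{\gamma}\phi(v)\,{\rm d}v}.
\]

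Next I would evaluate this ratio in closed form. The probability measure proportional to $v^{\gamma}\phi(v)\,{\rm d}v$ on $(0,\infty)$ is the law of a chi-distributed variable $S$ with $\gamma+1$ degrees of freedom, so $\chi=2\,\E\{\bar\Phi(cS)\}=2\,\pr(Z>cS)$ with $Z\sim\calN(0,1)$ independent of $S$. Since $\sqrt{\gamma+1}\,Z/S$ has a Student-$t$ distribution with $\gamma+1$ degrees of freedom, $\pr(Z>cS)=\pr\{\sqrt{\gamma+1}\,Z/S>c\sqrt{\gamma+1}\}=1-T(c\sqrt{\gamma+1};\gamma+1)$, and simplifying $c\sqrt{\gamma+1}=(1+\gamma)^{1/2}(1-\rho)(1-\rho^2)^{-1/2}$ gives \eqref{eq:ADchi}; this value is strictly positive for $\rho\in(-1,1)$. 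Applying the ratio asymptotic along the diverging levels $q_x=G_1^{-1}(1-1/x)$ then yields $\bar C(1-1/x,1-1/x)=\pr(X_1>q_x,X_2>q_x)\sim\chi\,\pr(X_1>q_x)=\chi/x$, which is \eqref{eq:ADbv}; since this tail is regularly varying of index $1$ in $x$, one gets $\bar\chi=1$.

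The hard part is the Breiman-type interchange of limit and expectation in the second step: $x/R$ is largest precisely where $R$ is smallest, so the behaviour of $R$ near the origin must be controlled uniformly in $x$, which is exactly what Potter's inequalities for $\bar F$ deliver, together with the rapid decay of $|\psi'|$ and $\phi$. An alternative, equivalent route observes that \eqref{eq:ADV} makes $\boldsymbol X$ multivariate regularly varying---its radial variable $R^\star=RR_{\boldsymbol W}$ inherits the index $\gamma$ by Breiman's lemma, since $R_{\boldsymbol W}$ has all moments finite---and reads $\chi$ off the spectral measure using the elliptic tail theory of \citet{Berman:1992}; the conditioning argument above is, however, self-contained, relying only on the Gaussian conditional law and Karamata-type estimates.
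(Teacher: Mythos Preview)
Your argument is correct and is essentially the approach the paper points to: the paper's proof is a one-line reference to Breiman's lemma and \citet{op2013}, and your Potter-bound step together with the alternative route you mention at the end are precisely that machinery spelled out. The added value of your write-up is the self-contained derivation of the Student-$t$ constant via the chi-distribution identity $\chi=2\,\pr(Z>cS)$, which the paper leaves to the cited reference.
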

We use the terms Weibull-type and Pareto-type distributions for variables $R$ with tail representation \eqref{eq:AIV} and \eqref{eq:ADV}, respectively. 
The case where $R$ is deterministic or upper-bounded almost surely can be interpreted as a limit of \eqref{eq:AIV} as $\beta\rightarrow\infty$ and in this case $\bar{\chi}=\rho$.  More general results on off-diagonal decay rates are in the Supplementary Material.



To better understand the extremal dependence structure of Gaussian scale mixture processes, we shortly recall related max-stable limits; technical details are given in the Supplementary Material.  For a regularly varying scale distribution $F$ as defined in \eqref{eq:ADV} we get  extremal-$t$ limit processes \citep{op2013}. By contrast, if $F$ has a Weibull-type tail as defined by \eqref{eq:AIV} and provided that the Gaussian correlation is not perfect between distinct sites, the asymptotic independent structure yields a ``white noise'' max-stable limit. In this case, more insight can be obtained by considering triangular arrays of Gaussian scale mixtures with correlation increasing to one  at a certain speed \citep{Hashorva.2013}, similar to standard results for Gaussian triangular arrays \citep{Husler.Reiss:1989}, resulting in Brown--Resnick limit processes \citep{Kabluchko.al.2009}.  Brown--Resnick processes further arise as certain limits of extremal-$t$ processes, in such a way that appropriately defined triangular arrays of Gaussian scale mixtures with regularly varying scale variable converge to Brown--Resnick limits when the regular variation index $\gamma$ in \eqref{eq:ADV} tends to $\infty$.

\subsection{Models bridging asymptotic dependence and independence}\label{InterpAIandAD}
When modeling dependence in spatial extremes, fixing the type of asymptotic behavior has important consequences on the estimation of return levels for spatial functionals. 
This choice may be guided by a preliminary assessment of asymptotic dependence using the coefficients $\chi(u)$ and $\bar\chi(u)$ defined in \eqref{eq:chi_chibar}, although nonparametric estimates may be highly variable  especially with small sample sizes as uncertainties become increasingly large as $u\to1$; recall Figure~\ref{fig:results2}. Thus, preliminary selection of the asymptotic dependence regime may be awkward, and it is highly valuable to fit flexible models encompassing both  regimes in order to borrow strength across all stations for efficient estimation of tail properties from the data. As Gaussian processes have been widely applied and advocated in spatial statistics, it is useful to have the Gaussian dependence structure as a special case. Gaussian models may provide an appropriate description of the dependence in some applications, and it is always instructive to assess how far the fitted dependence model is from Gaussianity. 
Based on these arguments, a useful distribution for $R$ has to be chosen when fitting a Gaussian mixture model to observed extremes. We first describe some known parametric families that may be used to capture either asymptotic independence or asymptotic dependence, then we propose novel parsimonious models that encompass the two dependence classes.

In the asymptotic independence case, a flexible distribution satisfying Weibull-type tail behavior as in \eqref{eq:AIV} is the generalized gamma distribution; it covers the full range of Weibull coefficients $\beta>0$ in \eqref{eq:AIV}  and encompasses several well-known simpler parametric families, including the gamma distribution, the Weibull, exponential, Nakagami, Rayleigh, chi, chi-squared, half-normal and log-normal ones; see the Supplementary Material for details. However, its three parameters may lead to issues of parameter parsimony and identifiability in practice. A Rayleigh-distributed random scale $R$ yields Laplace random fields $X(\boldsymbol{s})$ with explicit joint densities $g(\boldsymbol{x})$ in \eqref{CDF_PDF} and a certain type of univariate generalized Pareto tails \citep{op2015}. Asymptotic dependence models are obtained from a regularly varying random scale $R$; recall Theorem~\ref{theor:taildep}. 
One possibility is to choose the distribution of $R^2$ as the inverse-gamma distribution with scale parameter $2$ and shape parameter ${\rm Df}/2$, yielding Student-$t$ random fields with ${\rm Df}>0$ degrees of freedom \citep{ro2006,ma2005} and converging to the Gaussian limit as ${\rm Df}\to\infty$. 
When $R$ is Pareto distributed, the Gaussian scale mixture has multivariate slash densities available in closed form, albeit in terms of the incomplete gamma function \citep{wa2006}. 

We now explore models able to bridge the two asymptotic regimes while keeping flexibility in both of the respective submodels. To assess the bivariate tail flexibility of Gaussian scale mixtures, we consider the range of possible $\chi$ and $\overline{\chi}$ coefficients that may be generated for a fixed correlation coefficient value $-1<\rho<1$ in the underlying bivariate Gaussian variable. Fixing $\rho$ in the Gaussian for comparing tail dependence is appropriate since it means fixing  $\rho$ in all of its scale mixtures (provided second moments exist) and thus fixing Kendall's $\tau$. Among the aforementioned models, only the (asymptotically dependent) Student-$t$ one covers both regimes by considering its extension to the (asympotically independent) Gaussian model on the boundary of its parameter space when $\mathrm{Df}=\infty$. Theorem \ref{theor:taildep} shows that any value of $\chi\in[0,1]$ may be obtained by varying the degrees of freedom ${\rm Df}$, with $\chi\downarrow 0$ as $\mathrm{Df}\rightarrow \infty$, which implies that the Student-$t$ model is fairly flexible in the asymptotic dependence case. However, thanks to Theorem~\ref{theor:tailindep}, only two values of $\overline{\chi}$ can be obtained, namely $\overline{\chi}=\rho$ when $\mathrm{Df}=\infty$ or $\overline{\chi}=1$ when $\mathrm{Df}<\infty$. The jump in the $\overline{\chi}$ value when moving from asymptotic dependence to asymptotic independence implies a lack of flexibility to capture joint tail decay rates in the asymptotic independence regime, owing to the relatively rigid Gaussian tail structure. We now propose three new random scale models, which provide a smoother transition between asymptotic dependence classes and/or provide more flexibility to detect and model the asymptotic dependence regime in data.

For Model~1, we transpose  the arguments of \citet{Wadsworth.al.2017} concerning the choice of the random scaling in pseudo-polar representations to our framework, and we model the random scale through  a generalized Pareto distribution, 
	\begin{equation}\label{ModelAD_AI3}
	F(r)= \begin{cases}
	1-(1+ \xi r)_+^{-1/\xi},&\xi\neq 0, \\
	1-\exp(-r),&\xi=0,
	\end{cases},\qquad 0 \leq r < r^\star, 
	\end{equation}
	with upper endpoint $r^\star=\infty$ when $\xi\geq 0$ and $r^\star=-1/\xi$ otherwise, and where $(x)_+=\max(x,0)$.
   Model~1 is Pareto-like with tail parameter $\gamma:=1/\xi$ in \eqref{eq:ADV} when $\xi>0$, yielding asymptotic dependence thanks to Theorem~\ref{theor:taildep} with $\overline{\chi}=1$ and $\chi\downarrow 0$ as $\xi\downarrow 0$.  It is exponential (i.e., Weibull-like with tail parameter $\beta=1$ in \eqref{eq:AIV}) when $\xi=0$ or $\xi\to0$, yielding asymptotic independence with $\overline{\chi}=2\{(1+\rho)/2\}^{1/3}-1$ thanks to Theorem~\ref{theor:tailindep}. Finally, it is upper-bounded when $\xi<0$, which gives  asymptotic independence with $\overline{\chi}=\rho$. Model~1 is a rather flexible model for asymptotic dependence similar to the Student-$t$ one, and the transition to asymptotic independence takes place in the interior of its parameter space, but it still lacks flexibility for asymptotic independence as only two specific values of $\bar{\chi}<1$ can arise for fixed $\rho$. Therefore, both the Student-$t$ model and Model~1 should be used when there is a dominating yet still uncertain  suspicion that the data are asymptotically dependent.

Since we seek to obtain more flexiblity and a smoother transition to asymptotic dependence within asymptotic independence, we propose a new Model~2 that can generate any value of $\bar\chi\in[\rho,1]$, for fixed Gaussian correlation $\rho<1$. This novel  two-parameter Weibull-type distribution with support $[1,\infty)$ contains the Dirac mass at $1$ as limiting case, yielding asymptotically independent standard Gaussian processes, and the Pareto distribution as boundary case, yielding asymptotic dependence; recall Theorem~\ref{theor:taildep}. Its distribution with parameters $\beta\geq0$ and $\gamma>0$ is defined as 
\begin{equation}\label{ModelAD_AI}
F(r) = \begin{cases}
1 - \exp\left\{-\gamma  (r^{\beta}-1)/\beta\right\}, &\beta>0,\\
1 - r^{-\gamma}, &\beta=0,
\end{cases}\quad r \geq 1.
\end{equation}
The distribution \eqref{ModelAD_AI} forms a continuous parametric family with respect to $\beta$ as the term $(r^{\beta}  -1)/\beta$ converges to $\log r$ as $\beta\downarrow0$. 
The type of asymptotic dependence is determined by the value of $\beta$. When $\beta>0$, \eqref{ModelAD_AI} coincides with the tail representation \eqref{eq:AIV} with the same tail parameter $\beta$, yielding asymptotic independence. When $\beta=0$ or $\beta\downarrow0$, the variable $R$ is Pareto distributed with $F(r)=1-r^{-\gamma}$, $r\geq1$, yielding asymptotic dependence. 
The Dirac mass at $1$ is obtained as $\beta\rightarrow\infty$ or as $\gamma\rightarrow\infty$. The benefit of Model~2 is to provide a smooth transition from asymptotic independence to asymptotic dependence with  $\overline{\chi}\uparrow 1$ for $\beta\downarrow 0$ and $\gamma>0$ fixed; moreover, it still keeps a smooth transition from asymptotic dependence to asymptotic independence with $\overline{\chi}\downarrow 0$ as $\gamma\downarrow 0$ and $\beta=0$ fixed, leading to a Gaussian limit in analogy to the Student model. Model 2 has similarities with univariate tail models such as the extended generalized Pareto distributions proposed by \citet{Papastathopoulos.Tawn:2013} and \citet{Naveau.etal:2016}, which improve flexibility over the classical limit distribution by adding one or more extra parameters. 

A noteworthy link between Models 1 and 2 appears through the Box--Cox transform: the random scale in Model~1 arises from applying the Box--Cox transform with power $\xi$ to a standard Pareto random variable, while Model~2 is the result of applying the inverse Box--Cox transform with power $\beta$ to an exponential  variable with rate $\gamma$.


Model~2 provides high flexibility in both the asymptotic independence and asymptotic dependence cases where the latter lies on the boundary of the parameter space of the former. In practice, one may prefer to use Model~2 when there is a dominating (yet still uncertain) suspicion that the data are asymptotically independent; nonetheless, selection between separately fitted submodels through information criteria like \textsc{AIC} remains feasible. When the asymptotic dependence class is totally uncertain, it may be preferable to use a model for which the transition takes place in the interior of the parameter space, thus facilitating inference. One possibility (Model~3) is to choose a random scale variable $R$ with distribution
	\begin{equation}\label{ModelAD_AI2}
F(r)= \begin{cases}
1-(r^\star_{\beta}r)^\beta\exp[-\{(r^\star_{\beta}r)^\beta-1\}/\beta],& \beta\neq0,\\
1-r^{-1},&\beta=0,
\end{cases}\qquad r\geq  1 
\end{equation}
where $r^\star_{\beta}=\sup\{r: r^\beta\exp\{-(r^\beta-1)/\beta\}=1\}\in [1,2)$ and $\beta\in\Real$. This model is closely related to our proposal in \eqref{ModelAD_AI} and is also more parsimonious, but it may be more tricky to handle computationally as its support depends on $\beta$. From \eqref{ModelAD_AI2}, one can see that $F(r)$ is a Weibull-like distribution when $\beta>0$ with the same tail parameter $\beta$ as in \eqref{eq:AIV}, whereas $F(r)$ is a Pareto-like distribution with tail parameter $\gamma:=-\beta$ in \eqref{eq:ADV} when $\beta<0$. Furthermore, the tail dependence strength decreases as $|\beta|$ increases, irrespective of its sign. From this, identifiability issues may arise with two local likelihood maxima for $\beta>0$ and $\beta<0$, which might be bypassed by maximizing the likelihood separately with two distinct initial values.

Figure~\ref{fig:chi_chibar} shows the coefficients $\chi(u)$ and $\bar\chi(u)$ with respect to the parameters  $\beta$ and $\gamma$ of Model~2 in \eqref{ModelAD_AI}  as a function of the threshold $u\in[0.9,1]$ for various parameters and correlation $\rho$ of $(W_1,W_2)^T$. The plots illustrate the ability of Model~2 to smoothly interpolate from asymptotic independence to asymptotic dependence. To contrast the flexibility of this model \eqref{ModelAD_AI} with respect to the Gaussian copula, we also display $\chi(u)$ and $\bar\chi(u)$ for the Gaussian copula, whose correlation coefficient is chosen such that these coefficients for the two models match at the level $u=0.95$. Significant differences appear between the Gaussian copula and Model  2 when $\beta,\gamma\leq 1$, especially for $u\approx1$. When $\beta$ or $\gamma$ increases, our model approaches the Gaussian copula although the limit quantities $\bar\chi$ may still be quite different for moderate values of $\beta$ or $\gamma$. The extra tail flexibility of our model compared to the Gaussian copula is apparent when we vary the parameters $\beta$ and $\gamma$. Furthermore, the differences in joint tail decay rates with respect to the Gaussian copula are even more evident in higher dimensions.

\begin{figure}[t!]
\centering
\includegraphics[width=\linewidth]{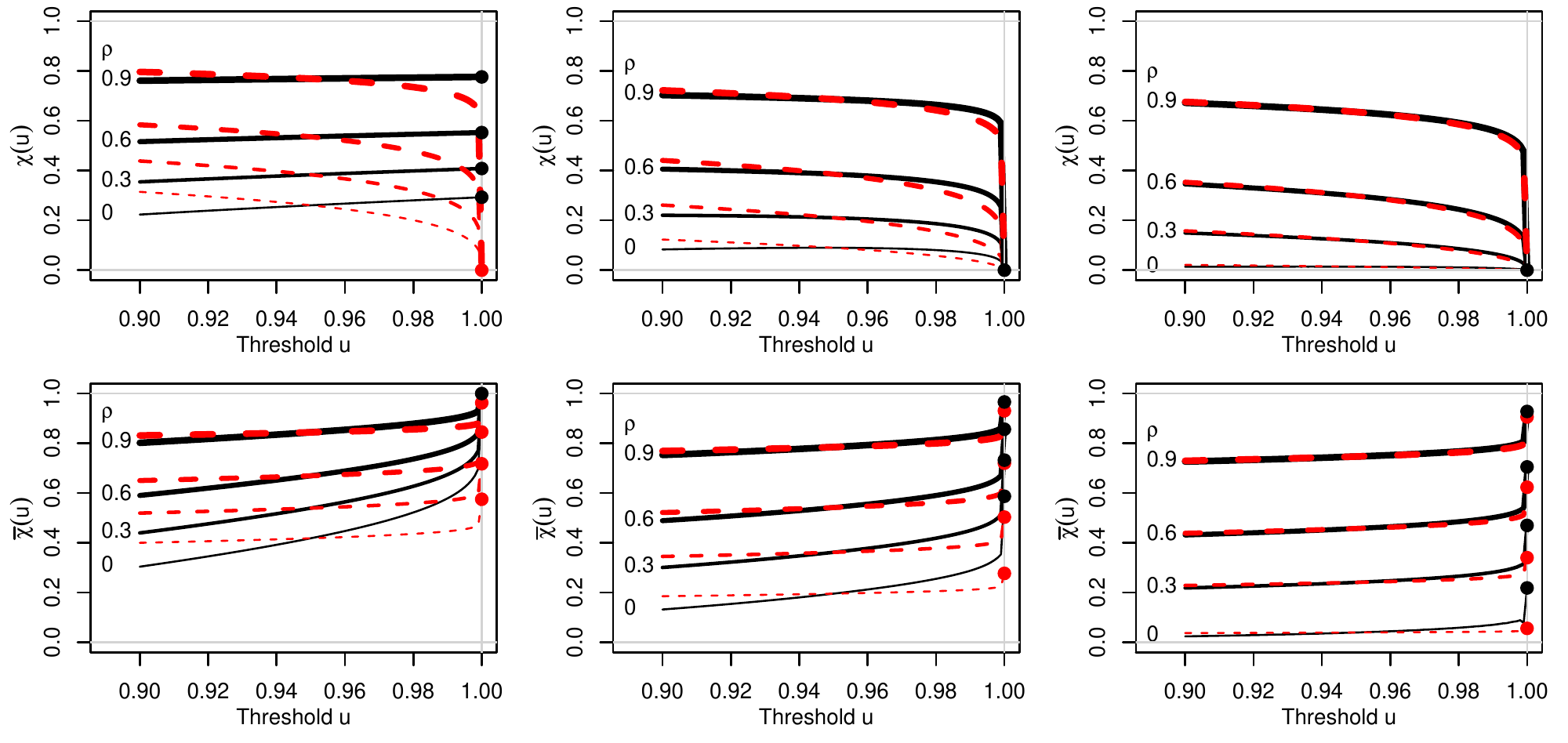}
\caption{Coefficients $\chi(u)$ (top) and $\bar\chi(u)$ (bottom), $u\in[0.9,1]$, for Model~2 defined in  \eqref{ModelAD_AI} (solid black) and for the Gaussian copula matching at $u=0.95$ (dashed red). Parameter configurations are  $\beta=0,\gamma=1$ (left; asymptotically equivalent to $\xi=1$ in Model~1 and to $\beta=0$ in Model~3), $\beta=1,\gamma=1$ (middle; asymptotically equivalent to $\xi=0$ in Model~1 and to $\beta=1$ in Model~3), $\beta=5,\gamma=1$ (right). Thin to thick curves correspond to increasing correlations $\rho=0,0.3,0.6,0.9$ for model \eqref{ModelAD_AI}. Limit quantities $\chi$ and $\bar\chi$ are represented by dots at $u=1$. The left column displays an asymptotic dependent scenario, while the middle and right columns display asymptotic independence.}\label{fig:chi_chibar}
\end{figure}
\section{Likelihood inference}\label{sec:Inference}

\subsection{Maximum likelihood approach with partial censoring}\label{subsec:Lik}


To estimate the extremal dependence structure from the observed high spatial threshold exceedances, we advocate a two-step procedure:  marginal distributions are estimated nonparametrically based on ranks and then copula parameters are estimated using a full pseudo-likelihood with partial censoring preventing estimates from being influenced by low and moderate values. Although it has never been applied to Gaussian scale mixture models, this censored approach is now quite popular in statistics of extremes \citep[see, e.g.,][]{Thibaud.etal:2013,Huser.Davison:2014a,Wadsworth.Tawn:2014}, and results from \citet{Thibaud.Opitz:2015} and \citet{Huser.etal:2016} suggest it provides a reasonable compromise between bias and variance compared to alternative approaches to fit threshold exceedances. 

Let $\boldsymbol{Y}_i=(Y_{1i},\ldots,Y_{Di})^T$, $i=1,\ldots,n$, denote $n$ independent and identically distributed observations from a process $Y(\boldsymbol{s})$ at the stations $\boldsymbol{s}_1,\ldots,\boldsymbol{s}_D\in\calS$. We assume that in the joint tail region corresponding to large values of $Y_{1i},\ldots,Y_{Di}$, the vectors $\boldsymbol{Y}_i$, $i=1,\ldots,n$, are well described by a continuous joint distribution $H$ with margins $H_1,\ldots,H_D$ and copula $C$ stemming from a Gaussian scale mixture \eqref{RandomScale}. 

Since we focus on the extremal dependence structure, we estimate marginal distributions $H_1,\ldots,H_D$ in a first step using empirical distribution functions. Defining $\hat H_k(y)={(n+1)^{-1}\sum_{i=1}^n I(Y_{ki}\leq y)}$ with the indicator function $I(\cdot)$, we transform the data to a pseudo-uniform scale as
\begin{equation}\label{transformation}
U_{ki}:=\hat H_k(Y_{ki})={{\rm rank}(Y_{ki})\over n+1}, \qquad k=1,\ldots,D,\ i=1,\ldots,n,
\end{equation}
where ${\rm rank}(Y_{ki})$ is the rank of $Y_{ki}$ among the variables $Y_{k1},\ldots,Y_{kn}$. The denominator $(n+1)$ in \eqref{transformation} ensures that transformed variables $U_{ki}$, $k=1,\ldots,D,i=1,\ldots,n$, are within $(0,1)$. Since $\hat H_k$ is a consistent estimator of $H_k$ for each component $k=1,\ldots,D$ as $n\to\infty$, the variables $U_{k1},\ldots,U_{kn}$ form an approximate uniform ${\rm Unif}(0,1)$ random sample for large $n$. The nonparametric estimator \eqref{transformation} may not be very good for the most extreme values very close to $1$, but it is very robust and its bias is of the order $\calO(n^{-1})$. We have verified through simulations that it yields reasonable results, even for moderate values of $n$; see \S\ref{sec:SimulStudy}.

In the second step, we assume that the transformed variables $U_{k1},\ldots,U_{kn}$, $k=1,\ldots,D$, are perfect random samples from the ${\rm Unif}(0,1)$ distribution, and we fit the copula \eqref{eq:CopulaDistribution} stemming from \eqref{RandomScale} (e.g., based on our model \eqref{ModelAD_AI}) by maximum likelihood, censoring values not exceeding marginal thresholds $v_1,\ldots,v_D\in(0,1)$. Let $C(\cdot;\boldsymbol{\psi})$ and $c(\cdot;\boldsymbol{\psi})$ respectively denote the chosen parametric copula family and its density, where $\boldsymbol{\psi}\in\Psi\subset\Real^p$ is the unknown vector of parameters to be estimated. The thresholds $v_k$, $k=1,\ldots,D$, are usually chosen such that $1-v_k$ is a low exceedance probability like $0.05$ or $0.01$. For each independent temporal replicate $i=1,\ldots,n$, the vector $\boldsymbol{U}_i=(U_{1i},\ldots, U_{Di})^T$ is either below, above or partially above the threshold $\boldsymbol{v}=(v_1,\ldots,v_D)^T$, leading to different censored likelihood contributions. We adopt the lowercase notation $\boldsymbol{u}_i=(u_{1i},\ldots,u_{Di})^T$, $i=1,\ldots,n$, for the realized values of $\boldsymbol{U}_i$ and write $\boldsymbol{u}_i^\star=\{\max(u_{1i},v_1),\ldots,\max(u_{Di},v_D)\}^T$. Three distinct scenarios can occur:
\begin{enumerate}
\item if all components of the vector $\boldsymbol{u}_i$ are below the threshold $\boldsymbol{v}$ (i.e., $\boldsymbol{u}_i^\star=\boldsymbol{v}$), we use the fully censored likelihood contribution
\begin{equation*}
L(\boldsymbol{u}_i;\boldsymbol{\psi}) = C(\boldsymbol{u}_i^\star;\boldsymbol{\psi}) = C(\boldsymbol{v};\boldsymbol{\psi})=G\{G_1^{-1}(v_1),\ldots,G_D^{-1}(v_D)\}.
\end{equation*}
\item if all components of the vector $\boldsymbol{u}_i$ are above the threshold $\boldsymbol{v}$ (i.e., $\boldsymbol{u}_i^\star=\boldsymbol{u}_i$),  we use the uncensored contribution 
\begin{equation*}
L(\boldsymbol{u}_i;\boldsymbol{\psi}) = c(\boldsymbol{u}_i^\star;\boldsymbol{\psi}) = c(\boldsymbol{u}_i;\boldsymbol{\psi})=\dfrac{g\{G_1^{-1}(u_{1i}),\ldots,G_D^{-1}(u_{Di})\}}{\prod_{k=1}^D g_k\{G_k^{-1}(u_{ki})\}}.
\end{equation*}
\item if some but not all components of the vector $\boldsymbol{u}_i$ are above the threshold $\boldsymbol{v}$ (indexed by the set $I_i\subset\{1,\ldots,D\}$ with complement $I_i^c\subset\{1,\ldots,D\}$), 
we use the partially censored likelihood contribution
\begin{equation*}
L(\boldsymbol{u}_i;\boldsymbol{\psi}) = C_{I_i}(\boldsymbol{u}_i^\star;\boldsymbol{\psi})= \int_{\boldsymbol{0}}^{\boldsymbol{v}_{I_i^c}} c(\boldsymbol{u}_i;\boldsymbol{\psi}) {\rm d}\boldsymbol{u}_{i_{I_i^c}}=\dfrac{G_{I_i}\{G_1^{-1}(u_{1i}^\star),\ldots,G_D^{-1}(u_{Di}^\star)\}}{\prod_{k\in I_i} g_k\{G_k^{-1}(u_{ki}^\star)\}}.
\end{equation*}
\end{enumerate}
In the above expressions, the joint distribution $G$ and density $g$, the marginal distributions $G_k$ and densities $g_k$, $k=1,\ldots,D$ and the partial derivatives $G_{I_i}$ are given by \eqref{CDF_PDF}, \eqref{CDF_PDF_margins} and \eqref{PartialDeriv} respectively. Our censored log likelihood is defined as the sum of all individual log contributions, that is,
\begin{equation}\label{eq:lik}
\ell(\boldsymbol{\psi})=\sum_{i=1}^n \log\{L(\boldsymbol{u}_i;\boldsymbol{\psi})\}.
\end{equation}
The maximum likelihood estimator (MLE) $\hat{\boldsymbol{\psi}}$, obtained by maximizing \eqref{eq:lik} over $\Psi$, is a full likelihood estimator for the censored observations $\boldsymbol{U}_i^\star=\max(\boldsymbol{U}_i,\boldsymbol{v})$ augmented by the indicator variables $I(U_{ki}\leq v_k)$, $k=1,\ldots,D$, $i=1,\ldots,n$. If the copula $C$ is well specified and if the marginal estimation performed in the first step is perfect such that the transformed observations $\boldsymbol{U}_i$ are perfectly uniform, the estimator $\hat{\boldsymbol{\psi}}$ obeys classical likelihood theory: as $n\to\infty$, it is strongly consistent, asymptotically normal, attains the Cram\'er--Rao bound and converges at rate $\calO(n^{1/2})$ under well-known regularity conditions. Notice that with Model~2 in \eqref{ModelAD_AI}, the case $\beta=0$ is nonstandard as it lies on the boundary of the parameter space and must be treated separately; this issue, however, does not arise with Models~1~and~3 in~\eqref{ModelAD_AI3} and \eqref{ModelAD_AI2}, respectively. Moreover, the nonparametric transformation in \eqref{transformation} results in a slight asymptotically vanishing marginal misspecification for finite $n$. Despite this issue, \citet{Genest.etal:1995} show that under mild conditions, the maximum pseudo-likelihood estimator has similar asymptotic properties to the MLE, although with a slight loss in efficiency.

\subsection{Simulation study}\label{sec:SimulStudy}

To assess the performance of the maximum pseudo-likelihood estimator $\hat{\boldsymbol{\psi}}$ defined through \eqref{eq:lik}, we simulate $n=1000$ independent copies of the Gaussian scale mixture $X(\boldsymbol{s})=RW(\boldsymbol{s})$ defined in \eqref{RandomScale} at $D=5,10,15$ locations $\boldsymbol{s}_1,\ldots\boldsymbol{s}_D$ uniformly generated in $\calS=[0,1]^2$. Here, we focus on our random scale model \eqref{ModelAD_AI} (Model~2), as this is the model we choose to work with in our application in Section~\ref{sec:Application} for the reasons explained below. For the Gaussian process $W(\boldsymbol{s})$, we choose a stationary isotropic correlation function $\rho(\boldsymbol{s}_1,\boldsymbol{s}_2)=\exp\{-(\|\boldsymbol{s}_1-\boldsymbol{s}_2\|/\lambda)^\nu\}$ with range parameter ${\lambda=0.5,1}$ and smoothness parameter $\nu=1$. Sample paths of $X(\boldsymbol{s})$ are thus continuous but nondifferentiable. For the random scale variable $R$, we consider three scenarios: setting $\gamma=1$, we choose $\beta=0$ (asymptotic dependence) or $\beta=0.5,1$ (asymptotic independence). We then estimate all copula parameters $\boldsymbol{\psi}=(\lambda,\nu,\beta,\gamma)^T\in \Psi=(0,\infty)\times(0,2]\times[0,\infty)\times(0,\infty)$ by maximizing the censored pseudo-likelihood \eqref{eq:lik} using marginal thresholds $\boldsymbol{v}=(v,\ldots,v)^T$ with $v=0.95$. This yields $n(1-v)=50$ exceedances at each location, occurring simultaneously or not depending on the extremal dependence strength. We repeat this experiment $500$ times to obtain boxplots of estimated parameters and compute simple performance metrics. Computational details are described in the Supplementary Material.

\begin{table}[t!]
\centering
\begin{tabular}{c|cc|cc|cc}
&\multicolumn{2}{c|}{$100\times{\rm B}$} & \multicolumn{2}{|c|}{$100\times{\rm SD}$} & \multicolumn{2}{|c}{$100\times{\rm RMSE}$}\\
 & $\lambda=0.5$ & $\lambda=1$ & $\lambda=0.5$ & $\lambda=1$ & $\lambda=0.5$ & $\lambda=1$\\\hline
$\beta=0$ & $4/5/8/11$ & $12/7/11/19$ & $9/7/8/23$ & $19/6/10/27$ & $10/9/12/25$ & $22/9/15/33$ \\
$\beta=0.5$ & $4/2/8/25$ & $15/4/3/24$ & $10/7/23/51$ & $25/5/31/73$ & $11/7/24/57$ & $29/7/31/77$ \\
$\beta=1$ & $5/2/27/64$ & $14/3/18/58$ & $10/6/44/97$ & $23/5/53/108$ & $11/7/51/116$ & $27/6/56/123$ \\
\end{tabular}
\caption{Bias (${\rm B}$), standard deviation (${\rm SD}$) and root mean squared error (${\rm RMSE}$) of estimated parameters $\hat{\boldsymbol{\psi}}=(\hat\lambda,\hat\nu,\hat\beta,\hat\gamma)^T$ for the Gaussian scale mixture model \eqref{RandomScale} using \eqref{ModelAD_AI} with correlation function $\rho(\boldsymbol{s}_1,\boldsymbol{s}_2)=\exp\{-(\|\boldsymbol{s}_1-\boldsymbol{s}_2\|/\lambda)^\nu\}$ and parameters $\beta=0,0.5,1$, $\gamma=1$, $\lambda=0.5,1$ and $\nu=1$. Simulations are based on $n=1000$ independent replicates observed at $D=15$ uniform locations in $[0,1]^2$. Maximum likelihood estimation is based on \eqref{eq:lik} with threshold $\boldsymbol{v}=(v,\ldots,v)^T$, $v=0.95$. Results stem from $500$ independent experiments.}\label{tab:results1}
\end{table}

\begin{figure}[t!]
\centering
\includegraphics[width=\linewidth]{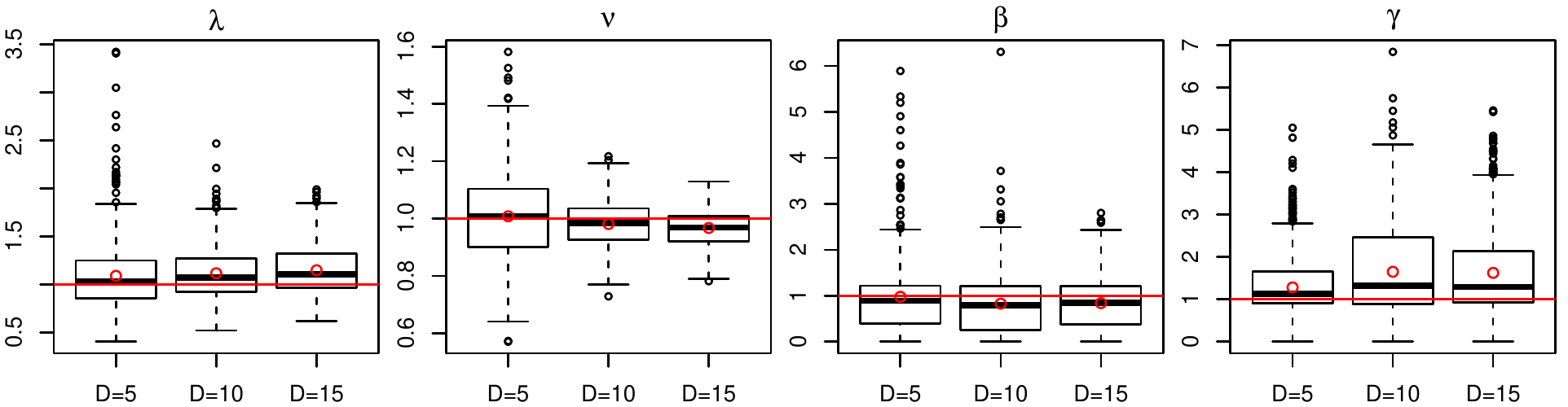}
\caption{Boxplots of estimated parameters $\hat{\boldsymbol{\psi}}=(\hat\lambda,\hat\nu,\hat\beta,\hat\gamma)^T$ for the Gaussian scale mixture model \eqref{RandomScale} using \eqref{ModelAD_AI} with correlation function $\rho(\boldsymbol{s}_1,\boldsymbol{s}_2)=\exp\{-(\|\boldsymbol{s}_1-\boldsymbol{s}_2\|/\lambda)^\nu\}$ and parameters $\lambda=\nu=\beta=\gamma=1$. Simulations are based on $n=1000$ independent replicates observed at $D=5,10,15$ uniform locations in $[0,1]^2$. Maximum likelihood estimation is based on \eqref{eq:lik} with threshold $\boldsymbol{v}=(v,\ldots,v)^T$, $v=0.95$. Boxplots are produced from $500$ independent experiments. Red points/horizontal lines show the estimated means/true values.}\label{fig:results1}
\end{figure}

Table~\ref{tab:results1} reports the absolute bias (${\rm B}$), standard deviation (${\rm SD}$) and root mean squared error (${\rm RMSE}=({\rm B}^2+{\rm SD}^2)^{1/2}$) of estimated parameters for $D=15$ based on the $500$ experiments. Figure~\ref{fig:results1} displays boxplots of estimated parameters for $\lambda=\nu=\beta=\gamma=1$ as a function of the dimension $D=5,10,15$. Generally, the estimation procedure appears to work well although the variability of estimates is relatively large owing to the fairly low number of exceedances considered in this simulation setting. The bias is almost always dominated by the standard deviation but not by much. The root mean squared error indicates that parameters seem overall easier to estimate for lower values of $\beta$ and $\lambda$. Indeed, when $\beta$ increases, the parameters $\beta$ and $\gamma$ are more complicated to identify because the resulting copula converges to the Gaussian copula as $\beta\to\infty$ or $\gamma\to\infty$, resulting in higher standard deviations and biases. Overall, standard deviations seem to decrease slightly as the dimension $D$ increases, even though $\boldsymbol{\psi}$ is not a consistent estimator for model \eqref{RandomScale} under infill asymptotics. The improvement is most striking for the smoothness parameter $\nu$, which is difficult to estimate with few scattered locations but is not clear for $\gamma$. The slightly increasing bias for larger $D$ is due to the nonparametric marginal estimation in \eqref{transformation}  as $n$ is kept fixed here, but it vanishes as $n\to\infty$. Figure~\ref{fig:results2} displays estimated $\chi(u)$ and $\bar\chi(u)$ coefficients in \eqref{eq:chi_chibar} when $\lambda=\nu=\beta=\gamma=1$ based either on simple nonparametric estimators or on our parametric censored likelihood approach with $D=5,10,15$ locations. From these plots, it is clear that the extremal dependence structure is well estimated, and that the rate of tail decay is relatively well captured. In contrast with nonparametric estimators, the model-based estimates have largely reduced uncertainties especially for large thresholds $v$ and borrow strength across locations for better tail estimation; hence they provide higher confidence about the asymptotic dependence class. Similar results were obtained with other parameter combinations under asymptotic independence ($\beta>0$) or dependence ($\beta=0$).

To investigate the tail flexibility of our model, we consider a misspecified setting. We simulate Student-$t$ random processes with ${\rm Df}=1,2,5,10$ degrees of freedom using the same correlation function, and we fit the Gaussian scale mixture \eqref{RandomScale} based on our model \eqref{ModelAD_AI}. Both models are Gaussian scale mixtures although they are not nested in each other. However, the joint tail behavior of a Student-$t$ process with ${\rm Df}$ degrees of freedom can be considered to be close to that of the Gaussian scale mixture model \eqref{ModelAD_AI} with $\beta=0$ and $\gamma={\rm Df}$ because the two models have the same limiting max-stable dependence structure. Our model should therefore provide a reasonable approximation although asymptotic dependence is a boundary case. Boxplots of estimated parameters (see Supplementary Material) suggest that when ${\rm Df}=1,2$, the performance of the MLE $\hat{\boldsymbol{\psi}}$ is similar to that obtained in a well-specified setting. When ${\rm Df}=5,10$, estimates of $\lambda$ and $\gamma$ are more ``biased,'' but the extremal dependence structure is still very well captured, as illustrated by estimated $\chi(u)$ and $\bar\chi(u)$ coefficients in Figure~\ref{fig:results3}.
\begin{figure}[t!]
\centering
\includegraphics[width=0.9\linewidth]{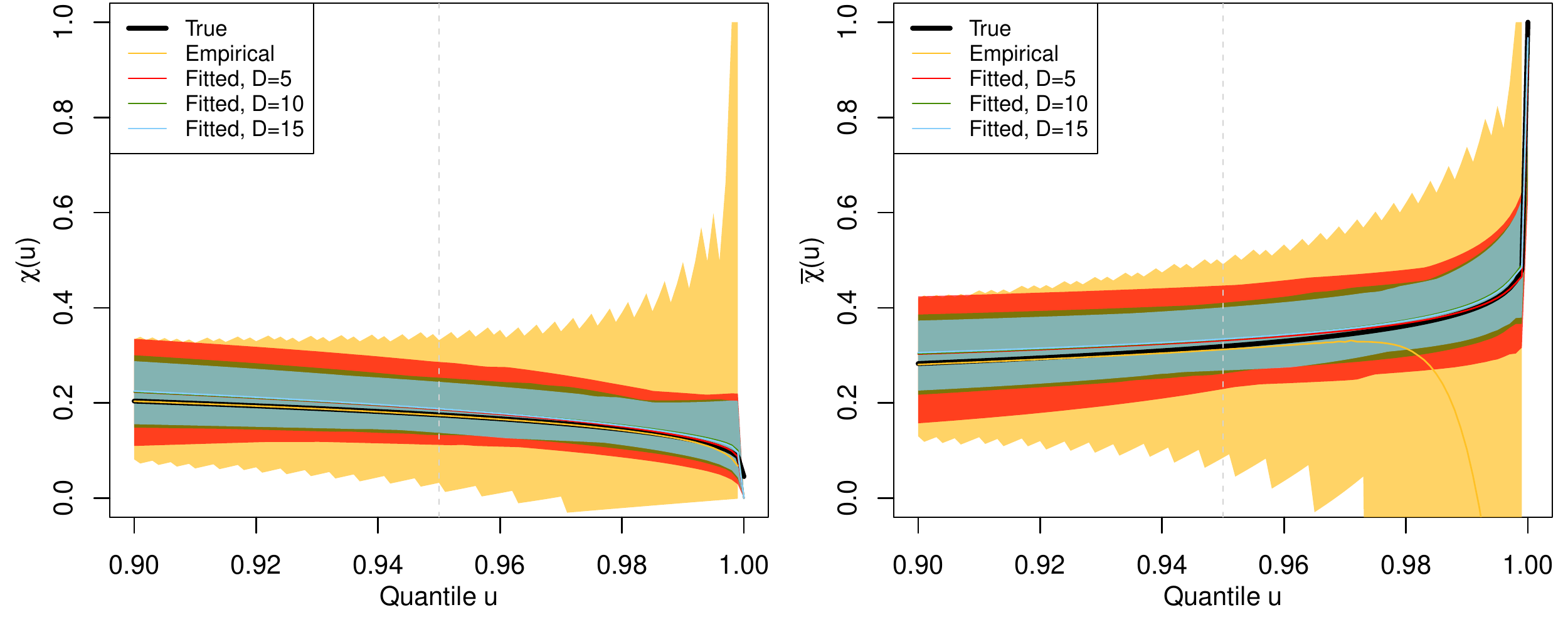}
\caption{Estimated coefficients $\chi(u)$ (left) and $\bar\chi(u)$ (right), $u\in[0.9,1]$, for data simulated from a Student-$t$ process with ${\rm Df}=10$ degrees of freedom and correlation function $\rho(\boldsymbol{s}_1,\boldsymbol{s}_2)=\exp\{-(\|\boldsymbol{s}_2-\boldsymbol{s}_1\|/\lambda)^\nu\}$ with $\lambda=0.5$ and $\nu=1$ for two points at distance $\|\boldsymbol{s}_2-\boldsymbol{s}_1\|=0.5$. Estimation is either nonparametric (yellow) or parametric based on the (misspecified) Gaussian scale mixture model \eqref{RandomScale} with random scale \eqref{ModelAD_AI}, using $D=5$ (red), $10$ (green) or $15$ (blue) uniform locations in $[0,1]^2$. The number of replicates is $n=1000$. Solid lines show means of $500$ simulations, while shaded areas are $95\%$ overall confidence envelopes. True curves are in solid black, and the threshold $v=0.95$ used in \eqref{eq:lik} is the vertical dashed line.}\label{fig:results3}
\end{figure}
In summary, our Gaussian scale mixture model provides a very flexible tail-dependent structure described by a relatively small number of parameters and can be consistently estimated from high threshold exceedances using a full pseudo-likelihood approach with partial censoring but remains fairly computer intensive for large dimensions.

\section{Application}\label{sec:Application}
To illustrate the benefits of our new modeling approach, we analyze hourly wind speed extremes recorded during 2012--2014 in the Pacific Northwest, US, a region with large wind energy resources. 
The dataset was compiled from data archived by the Bonneville Power Administration. Data are available year-round at $20$ meteorological towers along the border between Oregon and Washington. To avoid modeling complex spatio-temporal non-stationary patterns, we restrict our attention to winter months (DJF) for the 12 stations located on the East side of the Cascade mountain range; see Figure~\ref{fig:WindMap}. Selected data comprise up to $6504$ hourly observations at each site, with about $8\%$ of values missing. Owing to important East-West pressure gradients in this region and the special nature of the orography, wind patterns are mainly characterized by easterly and westerly winds (
\citealp{Kazor.Hering:2015a}). 
A wind rose for the data at the 12 stations reveals that extreme winds blow mostly from the West or South-West, suggesting that simple anisotropic models might perform well; see Figure~\ref{fig:WindMap}. More details about data and monitoring stations are in the Supplementary Material.

\begin{figure}[t!]
\begin{minipage}{0.63\linewidth}
\centering
\includegraphics[width=\linewidth]{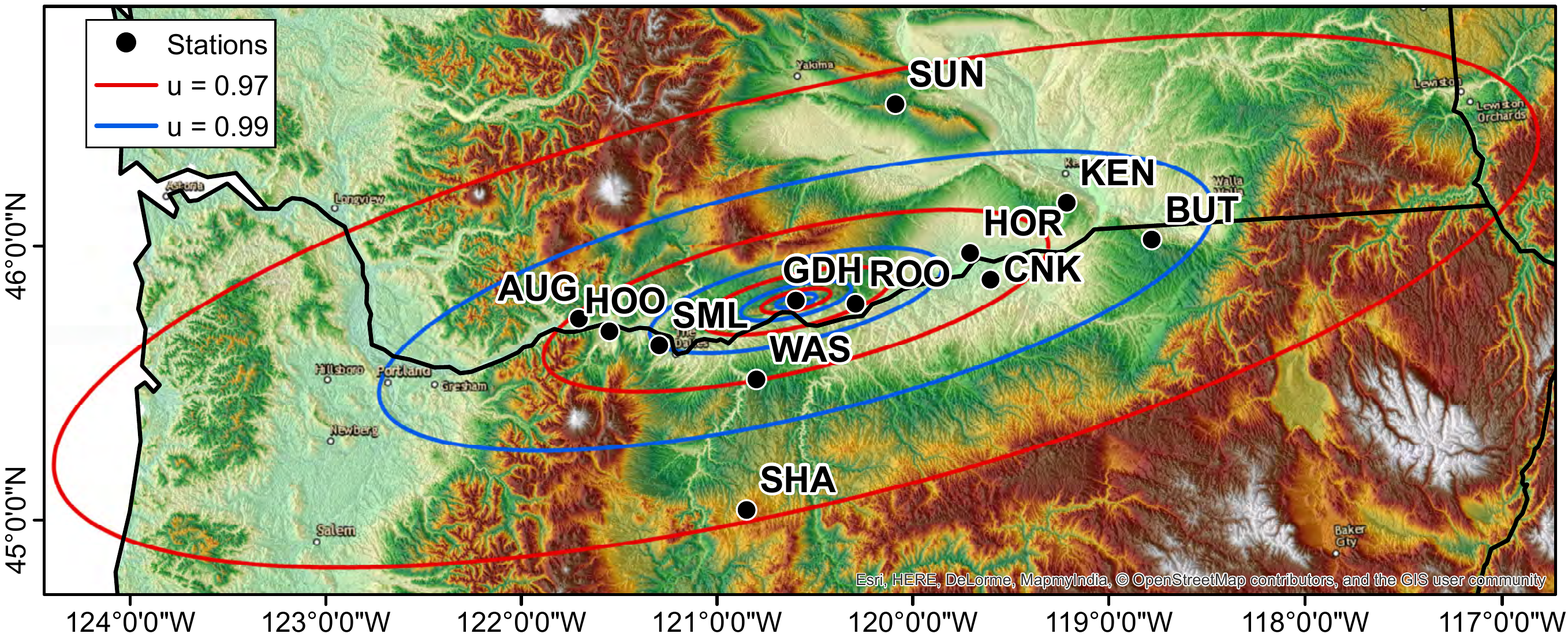}
\end{minipage}\hfill\begin{minipage}{0.37\linewidth}

\vspace{-5pt}

\includegraphics[width=\linewidth]{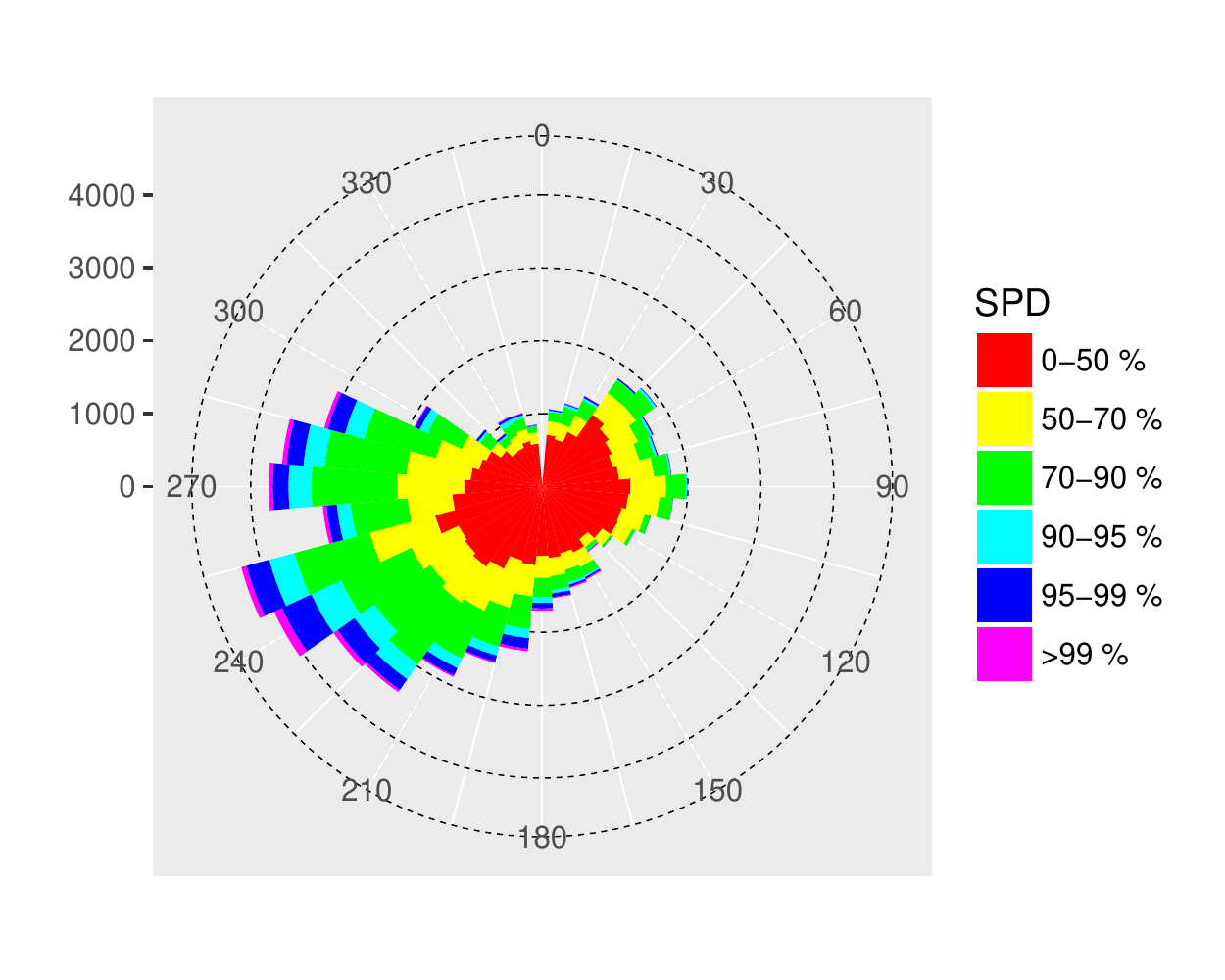}
\end{minipage}
\caption{\emph{Left:} Topographic map with meteorological towers selected in our study and state boundaries (black). The ellipses overlaid, centered at Goodnoe Hills (GDH), correspond to the isocontours of the fitted coefficients $\chi(u)=0.5,0.4,0.3,0.2$ (from the center outward) using the thresholds $u=0.97$ (red) and $u=0.99$ (blue), for the best anisotropic model. \emph{Right:} Wind rose of winter (DJF) wind speeds for the 12 stations, preliminarily transformed to the uniform scale. The color scale corresponds to different marginal quantile ranges.}\label{fig:WindMap}
\end{figure}

Hourly wind speeds show strong temporal dependence. Since our focus is on spatial dependence, we ignore temporal dependence for parameter estimation but account for it in uncertainty assessment using a block bootstrap. We consider three model classes:
\begin{enumerate}
\item[1.]  RW: Our new Gaussian scale mixture copula model defined through \eqref{RandomScale} and \eqref{ModelAD_AI} (Model~2), with underlying anisotropic correlation function $\rho(\boldsymbol{s}_1,\boldsymbol{s}_2)=\exp\left\{-(h_{12}/\lambda)^\nu\right\}$, where $h_{12}$ denotes the Mahalanobis distance, that is,
\begin{equation*}
h_{12}^2=(\boldsymbol{s}_1-\boldsymbol{s}_2)^T\Omega^{-1}(\boldsymbol{s}_1-\boldsymbol{s}_2),\quad \Omega=\begin{pmatrix}
\cos(\theta) & -\sin(\theta)\\
\sin(\theta) & \cos(\theta)
\end{pmatrix}\begin{pmatrix}
1 & 0\\
0 & \lambda_{12}^{-2}
\end{pmatrix}\begin{pmatrix}
\cos(\theta) & -\sin(\theta)\\
\sin(\theta) & \cos(\theta)\\
\end{pmatrix}^T.
\end{equation*}
The Mahalanobis distance defines elliptical isocontours; $\lambda$ and $\lambda_{12}$ are respectively the length of one principal axis and the length ratio of the two principal axes, while $\theta$ is the angle with respect to the West-East direction. When $\lambda_{12}=1$, the model is isotropic. The reason why we consider Model~2 instead of Models 1 or 3 proposed in \S\ref{InterpAIandAD} is that extreme wind speeds (and wind gusts in particular) are known to be very localized, suggesting that a very flexible asymptotically independent model might perform well in practice. However, notice that Model~2 does not exclude the possibility of asymptotic dependence, and that any uncertainty assessment based on Model~2 will take this possibility into account.
\item[2.]  $t$: The Student-$t$ copula with ${\rm Df}=1,\ldots,30$ degrees of freedom and the same anisotropic correlation function $\rho(\boldsymbol{s}_1,\boldsymbol{s}_2)$.
\item[3.]  Gauss: The Gaussian copula (same as 2., but with ${\rm Df}=\infty$ fixed).
\end{enumerate}
Eight models, summarized in the Supplementary Material, were fitted to the wind data using the censored likelihood estimator with threshold $\boldsymbol{v}=(v,\ldots,v)$, $v=0.95$; see \S\ref{subsec:Lik}. This yields about $6504\times 0.05\approx 325$ correlated exceedances at each site. The log profile censored likelihood of the (an)isotropic Student-$t$ copula is shown in Figure~\ref{fig:WindLikelihoods} and compared to the maximized log likelihoods of the (an)isotropic Gaussian scale mixture and Gaussian copula.
\begin{figure}[t!]
\centering
\begin{minipage}{0.5\linewidth}
\centering
\includegraphics[width=0.9\linewidth]{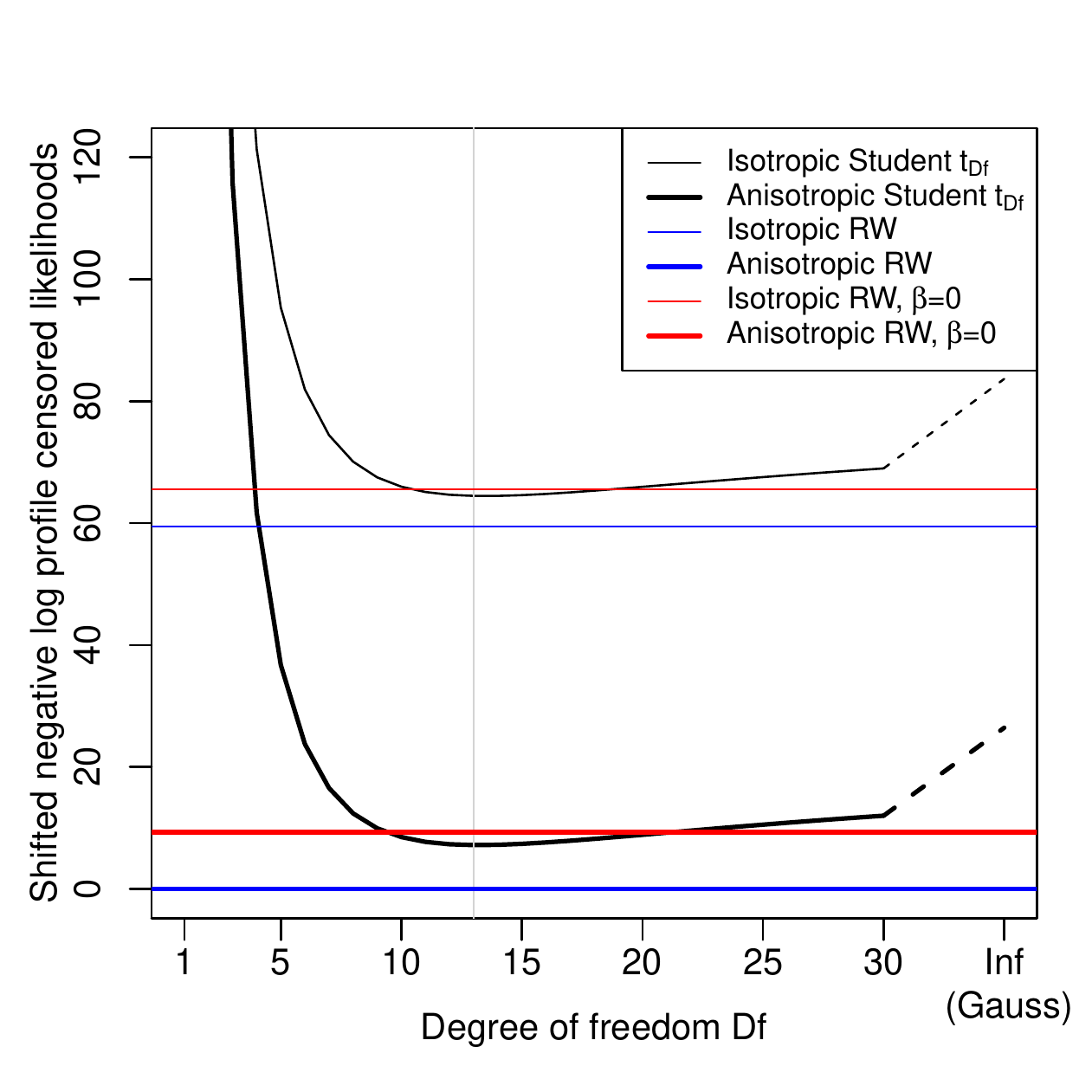}
\end{minipage}\hfill\begin{minipage}{0.5\linewidth}
\centering
\includegraphics[width=0.9\linewidth]{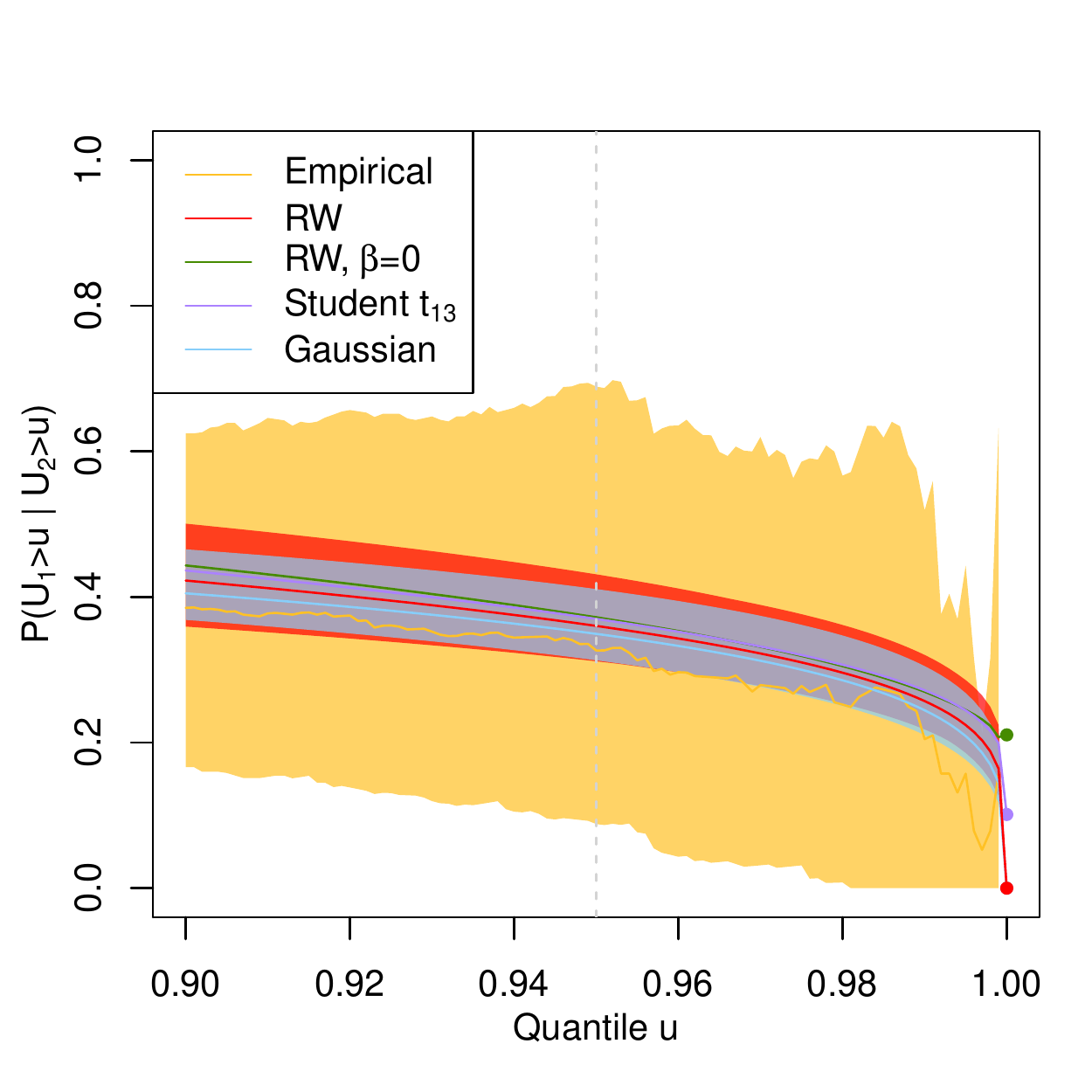}
\end{minipage}
\caption{\emph{Left:} Negative log profile censored likelihood for the Student-$t$ copula with ${\rm Df}$ degrees of freedom (black), and negative maximized log censored likelihood for the Gaussian scale mixture copula with $\beta\geq0$ (blue) and $\beta=0$ (red). The Gaussian copula corresponds to ${\rm Df}=\infty$. Isotropic and anisotropic models correspond to thin and thick lines, respectively. The best log likelihood value has been subtracted from all curves. The vertical grey line at ${\rm Df}=13$ represents the best Student-$t$ model. \emph{Right:} Estimated probability ${\pr(U_1>u\mid U_2>u)}$, as a function of the threshold $u$, with $U_1,U_2$ representing the hourly wind speed data at Goodnow Hills (GDH) and Augspurger (AUG), preliminarily transformed to the uniform scale. Different lines display the empirical estimate (yellow), and the fitted curves for the anisotropic unrestricted RW (red), restricted RW with $\beta=0$ (green), the Student-$t$ with ${\rm Df}=13$ (purple) and Gaussian (blue) models. $95\%$ overall confidence envelopes are shown for the empirical, unrestricted RW and Gaussian fits (using similar colors). Similar envelopes were obtained for the other models but are omitted here for readability.}\label{fig:WindLikelihoods}
\end{figure}
Deviances are not straightforward to interpret owing to temporal dependence, but anisotropic models seem to outperform isotropic models by a large margin. This is not surprising given the wind patterns in the study region. Moreover, the best Student-$t$ model has ${\rm Df}=13$ degrees of freedom, a result that is consistent for isotropic and anisotropic models. This suggests that the asymptotic dependence strength is quite weak although the asymptotically independent Gaussian copula model is outperformed by the extra flexibility of the Student-$t$ copula. The best model overall is the unrestricted anisotropic Gaussian scale mixture copula (with $\beta\geq0$); the difference in log likelihoods is about $7$ and $9$, respectively, with respect to the best Student-$t$ model and the unrestricted Gaussian scale mixture model with $\beta=0$. Estimated parameters for our best model, with $95\%$-confidence intervals computed using a weekly block bootstrap with $100$ replicates respecting the missing value patterns are reported in Table~\ref{tab:estim.par}. 
\begin{table}[t!]
\centering
\caption{Estimated parameters $\hat{\boldsymbol{\psi}}=(\hat\lambda,\hat\lambda_{12},\hat\theta,\hat\nu,\hat\beta,\hat\gamma)^T$ for the unrestricted anisotropic Gaussian scale mixture copula model with corresponding $95\%$-confidence intervals (CIs).}\label{tab:estim.par}
\begin{tabular}{r|cccccc}
Parameters & $\lambda$ [km] & $\lambda_{12}$ & $\theta$ [rad] & $\nu$ & $\beta$ & $\gamma$\\
\hline
Estimates & $287$ & $3.71$ & $0.23$ & $0.46$ & $1.96$ & $0.05$\\
$95\%$-CIs & $[95,1127]$ & $[2.64,5.08]$ & $[0.18,0.34]$ & $[0.32,0.76]$ & $[0.96,8.47]$ & $[0,0.32]$ \\
\end{tabular}
\end{table}
\begin{figure}[t!]
\centering
\includegraphics[width=\linewidth]{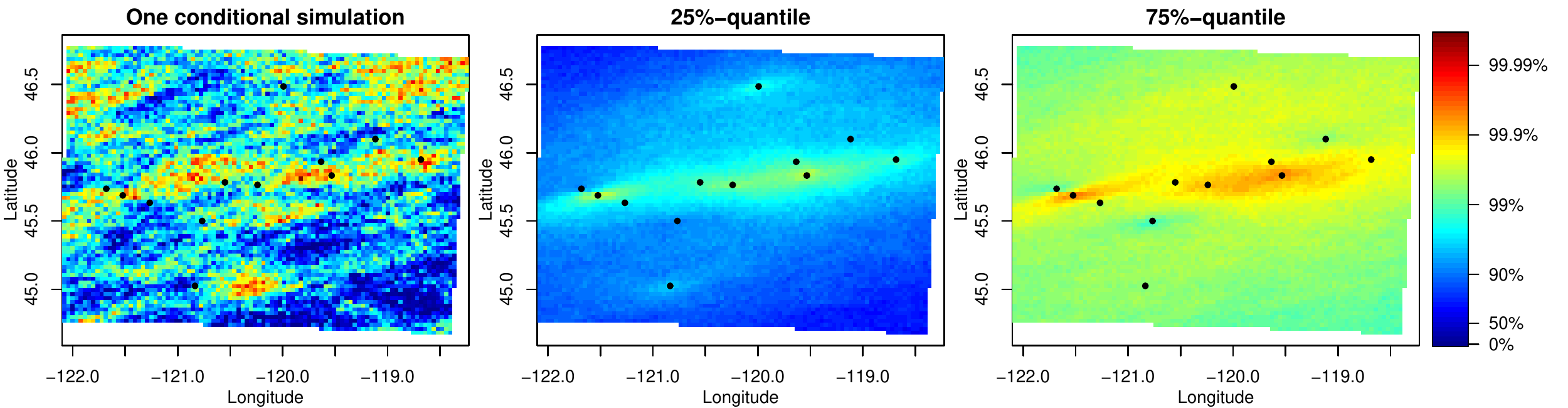}
\caption{\emph{Left:} Conditional simulation over the region of study for our best fitted RW copula model with parameters given in Table~\ref{tab:estim.par}. The simulation is done based on the algorithm described in \S\ref{sec:condsimalgo}, conditionally on the values observed at the twelve stations (black dots) on February 22, 2012, a day of very strong wind. \emph{Middle and right:} corresponding $25\%$ and $75\%$-conditional quantiles based on $500$ conditional simulations. The color scale indicates quantile probabilities.}
\label{fig:WindCondSim}
\end{figure}
Figure~\ref{fig:WindCondSim} shows a conditional simulation and corresponding $25\%$ and $75\%$-conditional quantiles (calculated based on $500$ simulations using the algorithm described in \S\ref{sec:condsimalgo}) of the wind speed field over the region of study for February 22, 2012, which corresponds to the day of strongest average wind speeds observed at the twelve stations.

The estimate $\hat\lambda_{12}=3.71$ implies that the anisotropy is strong, and $\hat\theta=0.23$ indicates that the direction of strongest correlation is toward the North-East, as illustrated by the isocontours of the fitted coefficient $\chi(u)$ overlaid on the left panel of Figure~\ref{fig:WindMap}, and by the wind patterns simulated in Figure~\ref{fig:WindCondSim}. Interestingly, this agrees with the wind rose, although the wind directions were not used in the fitting procedure. 
The value $\hat\nu=0.46$ indicates that the hourly wind field shows small scale variability, as expected. The estimates $\hat\beta=1.96$ and $\hat\gamma=0.05$ for the random scale parameters strongly support the assumption of asymptotic independence, given that the confidence interval for $\beta$ excludes zero by far. Furthermore, for all bootstrap replicates, the likelihood value was always lower when fixing $\beta=0$. This also indicates that standard asymptotics (with $\beta>0$) prevail in this case. The values of $\hat\beta$ and $\hat\gamma$ also suggest that the extremal dependence structure is relatively far from being Gaussian, and that our model better captures the data's extremal properties. To further validate our fitted models, the right panel of Figure~\ref{fig:WindLikelihoods} shows the estimated probability $\pr(U_1>u\mid U_2>u)$, which is more easily interpretable than $\chi(u)$ yet asymptotically equivalent as $u\to1$, plotted as a function of the threshold $u\in[0.9,1]$, where $U_1$ and $U_2$ represent the hourly wind speed data at Goodnow Hills (GDH) and Augspurger (AUG), preliminarily transformed to the uniform scale. This plot suggests that although all models seem to perform similarly and reasonably well at moderately high levels, they nevertheless predict very different behaviors for very large extremes with $u\to1$, highlighting the need for flexible tail models covering asymptotic independence and dependence scenarios, such as those proposed in \S\ref{InterpAIandAD}. This figure also clearly demonstrates that our model-based approach allows for a considerable reduction in the uncertainty compared to a fully nonparametric approach. 

\section{Discussion}\label{sec:Discussion}
Starting with a detailed study of  the tail behavior of general Gaussian scale mixtures, we have proposed new parsimonious and flexible subasymptotic copula models to achieve  a smooth transition between asymptotic independence and asymptotic dependence. Unlike approaches often used in multivariate analysis confronting the asymptotically dependent Student-$t$ model to the asymptotically independent Gaussian model, we give strong attention to appropriately capturing the tail decay in asymptotically independent scenarios while keeping a highly flexible asymptotically dependent submodel. Although our main contributions in this paper concern tail characteristics of Gaussian scale mixtures and their estimation, these elliptic copula models may also be useful for the modeling of the full data range, arising in a much wider spectrum of applications. 

In addition to providing more flexibility than max-stable models, inference for our model is also facilitated and may be performed using a censored pseudo-likelihood, although the latter is expressed in terms of integrals whose numerical approximation is quite intensive to compute. 
Overall, computations are of the same order as for the censored Poisson and Pareto likelihoods advocated by \citet{Wadsworth.Tawn:2014} and \citet{Thibaud.Opitz:2015}, which are only valid for asymptotically dependent data. The asymptotically dependent submodels of the models that we propose are closely related to the elliptic Pareto process of \citet{Thibaud.Opitz:2015} while their asymptotically independent counterparts provide alternatives and more flexible parametric extensions (Models 2 and 3) to the Laplace model of \citet{op2015}. Estimating the value of the random scale shape parameter allows the data to provide evidence about the asymptotic dependence class without fixing it a priori. 

Our partially censored likelihood approach transcribes the common idea in statistics of extremes that only observations in the tail should be used to determine the extremal dependence structure. It would of course be possible to use milder assumptions that yield faster estimation procedures, for example by using the full uncensored density when the observation vector is a partial exceedance of the multivariate threshold or by using robust  M-estimators based on ranks and bivariate tail characteristics as proposed in \citet{Einmahl.al:2016} for max-stable models  to avoid costly high-dimensional numerical integration.

Our application demonstrates that our new model can be useful in practice. Analyzing hourly wind speed data in the Pacific Northwest, US, we found quite strong evidence of asymptotic independence, and we showed that the extremal dependence structure was well captured by our copula model \eqref{ModelAD_AI}. Over large regions, a limitation of our model may be that independence at large distances can only be captured through exact Gaussian submodels, as a common random scale induces dependence to independent Gaussian components. To circumvent this issue, our model could be extended by considering a random partitioning approach \citep{Morris.etal:2017} or a random set element \citep{Huser.Davison:2014a} but inference would become awkward. Modeling of non-stationarity \citep{Huser.Genton:2016} and space-time dependence \citep{Huser.Davison:2014a} is an important aspect in statistics of extremes; it would be interesting to investigate useful extensions of Gaussian scale mixture models in further research.

\section*{Acknowledgments}
We thank Amanda Hering (Baylor University) for sharing the wind data and Luigi Lombardo (KAUST) for cartographic support. This work was undertaken while Emeric Thibaud was at Colorado State University with partial support by US National Science Foundation Grant DMS-1243102. Thomas Opitz was partially supported by the LEFE-CERISE project coordinated by the French National Center for Scientific Research.
\section*{Appendix: Proofs}
\appendix
We first provide lemmas from the literature by unifying notations and adapting existing results to our context.

\begin{lemma}[Gumbel domain of attraction; \citealp{ha2006b,ha2010a}]\label{lem:gumbel}\ 
	For a random variable $X\sim F_X$ with infinite upper endpoint, suppose that a weight function ${w(\cdot)>0}$ satisfies
	\begin{equation*}
	{1-F_X\{u  + x/w(u)\}\over 1-F_X(u)} \rightarrow \exp(-x), \quad u\rightarrow\infty.
	\end{equation*}
	Then $F_X$ is in the maximum domain of attraction of the unit Gumbel distribution, that is
	\begin{equation}\label{eq:convgumb}
	F_X^n(a_nx+b_n)\rightarrow \exp\{-\exp(-x)\},\qquad x\in\Real, 
	\end{equation}
	where we can choose  $b_n=F^{-1}(1-1/n)$ and $a_n=1/w(b_n)$ for $n>1$.
	If $X$ has Weibull-type tail decay 
	\begin{equation*}
	\pr(X\geq x) \sim \alpha x^{\gamma} \exp(-\delta x^\beta), \qquad x\to\infty,
	\end{equation*}
	for some constants $\alpha>0$, $\beta>0$, $\gamma\in \Real$ and   $\delta>0$, then \eqref{eq:convgumb} holds and we may take
	\begin{equation}\label{eq:wgumb}
	w(u) \sim \delta\beta u^{\beta-1},\quad u\rightarrow \infty.
	\end{equation}
\end{lemma}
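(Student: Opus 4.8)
The plan is to reduce both assertions to the single statement that $n\{1-F_X(a_nx+b_n)\}\to\exp(-x)$ for each fixed $x\in\Real$; once this is in hand, $F_X^n(a_nx+b_n)=\exp\bigl[n\log\{1-(1-F_X(a_nx+b_n))\}\bigr]\to\exp\{-\exp(-x)\}$, since $1-F_X(a_nx+b_n)\to0$ and $\log(1-t)=-t+O(t^2)$ as $t\downarrow0$. Write $\bar F_X=1-F_X$ throughout, and recall that $b_n=F^{-1}(1-1/n)\to\infty$ because $X$ has infinite upper endpoint.

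For the first statement, I would evaluate the hypothesis along the sequence $u=b_n$; pointwise convergence as $u\to\infty$ is all that is needed here, since $x$ is held fixed. Because $a_n=1/w(b_n)$ we have $a_nx+b_n=b_n+x/w(b_n)$, so the assumption directly gives $\bar F_X(a_nx+b_n)/\bar F_X(b_n)\to\exp(-x)$. It then remains to show $n\bar F_X(b_n)\to1$. From the definition of the generalized inverse, $\bar F_X(b_n)\le1/n\le\bar F_X(b_n-)$, so this follows once we know that $\bar F_X$ has asymptotically vanishing relative jumps in the right tail, i.e.\ $\bar F_X(u-)/\bar F_X(u)\to1$; this is forced by the hypothesis, since $\bar F_X(u+x/w(u))/\bar F_X(u)$ converges to the function $x\mapsto\exp(-x)$, which is continuous at $x=0$, so the ratio cannot exhibit a jump of fixed size across $x=0$ in the limit. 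Multiplying, $n\bar F_X(a_nx+b_n)=\{n\bar F_X(b_n)\}\{\bar F_X(a_nx+b_n)/\bar F_X(b_n)\}\to\exp(-x)$, which completes this part.

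For the second statement I would simply verify the hypothesis of the first part with $w(u)=\delta\beta u^{\beta-1}$. Setting $\epsilon_u=x/w(u)\sim x u^{1-\beta}/(\delta\beta)$, one has $\epsilon_u/u\to0$, so the polynomial prefactor contributes $(u+\epsilon_u)^\gamma/u^\gamma=(1+\epsilon_u/u)^\gamma\to1$, while a binomial expansion gives $(u+\epsilon_u)^\beta-u^\beta=u^\beta\{(1+\epsilon_u/u)^\beta-1\}=\beta u^{\beta-1}\epsilon_u+O(u^{\beta-2}\epsilon_u^2)=x/\delta+O(u^{-\beta})$, using $\beta u^{\beta-1}\epsilon_u=x/\delta$ and $u^{\beta-2}\epsilon_u^2=O(u^{-\beta})\to0$. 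Hence $\bar F_X(u+\epsilon_u)/\bar F_X(u)\sim(1+\epsilon_u/u)^\gamma\exp\{-\delta(u+\epsilon_u)^\beta+\delta u^\beta\}\to\exp(-x)$, and the first part applies with $b_n=F^{-1}(1-1/n)$ and $a_n=1/w(b_n)\sim b_n^{1-\beta}/(\delta\beta)$.

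I expect the only genuinely delicate point to be the tail control of $\bar F_X$ near $b_n$ itself, namely establishing $n\bar F_X(b_n)\to1$ and the associated absence of non-negligible atoms in the upper tail; this, together with the converse direction of the Gumbel-domain characterization (not needed for us), is precisely what the cited classical results \citet{ha2006b} and \citet{ha2010a} supply, so in the write-up I would either reproduce that short argument or refer to those sources. The Weibull-type computation in the second part, by contrast, is a routine identification of the von Mises auxiliary (weight) function as $w(u)\sim\delta\beta u^{\beta-1}$.
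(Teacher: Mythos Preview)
Your argument is correct and is the standard route to this classical result: factor $n\bar F_X(a_nx+b_n)$ as $\{n\bar F_X(b_n)\}\cdot\{\bar F_X(b_n+x/w(b_n))/\bar F_X(b_n)\}$, control the first factor via the generalized inverse and the absence of asymptotically non-negligible atoms, and read the second factor directly from the hypothesis; for the Weibull-type case, the von~Mises auxiliary function $w(u)=\delta\beta u^{\beta-1}$ is identified by a first-order expansion of the exponent, exactly as you do.

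There is, however, nothing to compare against: the paper does not prove this lemma. It is listed in the Appendix among ``lemmas from the literature'' with attribution to \citet{ha2006b,ha2010a}, and the unlabeled \texttt{proof} block that follows the three stated lemmas is not a proof of this one (it concerns the marginal tail of an elliptically distributed component via a Beta-scaling argument, not the Gumbel domain-of-attraction characterization). So your write-up supplies what the paper simply cites; the only editorial suggestion is that, since the result is standard, you could compress the first part to a one-line reference to the cited sources and keep only the explicit Weibull computation, which is the part actually used downstream in the paper.
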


\begin{lemma}[Joint tail decay for Weibull-type random scale; \citealp{ha2010a}]\label{lem:weibjoint}\ 
	Assume that the  bivariate elliptic random vector $\boldsymbol{X}=(X_1,X_2)^T\Deq R\boldsymbol{\Sigma}^{1/2}(U_1,U_2)^T$ with correlation matrix $\boldsymbol{\Sigma}$, $\rho=\boldsymbol{\Sigma}_{1;2}\in(-1,1)$, has radial distribution $R\sim F$ in the Gumbel domain of attraction with weight function $w(\cdot)$ as in \eqref{eq:wgumb} and with copula $C$. Then  
	\begin{equation}\label{eq:CRV}
	\dfrac{\overline{C}(1-x_1/u,1-x_2/u)}{\overline{C}(1-1/u,1-1/u)}\rightarrow (x_1x_2)^{1/(2\eta)}, \qquad \eta=\{(1+\rho)/2\}^{\beta/2}, \quad u\rightarrow \infty,
	\end{equation}
	where $\eta$ is the coefficient of tail dependence of $\boldsymbol{X}$. 
	More specifically, if $R$ has Weibull-type tail with parameters as in \eqref{eq:AIV}, then
	\begin{align}
	\label{eq:jointC}
	&\overline{C}(1-1/x,1-1/x) \sim K_1 \left(\log x\right)^{K_2}x^{-1/\eta}, \qquad x\rightarrow\infty, \\
	&K_1=\alpha^{-1} (1-\rho)^{-2}\{2/(1+\rho)\}^{\gamma/2-\beta/2+1} (1-\rho^2)^{3/2}\delta^{(1/\eta-1)\gamma/\beta} \{\alpha^2/(2\pi\beta)\}^{1-1/(2\eta)},\nonumber\\ 
	&K_2= (1-1/\eta)\gamma/\beta+1/(2\eta)-1. \nonumber 
	\end{align}
\end{lemma}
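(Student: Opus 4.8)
Since $(X_1,X_2)$ is elliptic with a unit-diagonal dispersion matrix, its two margins coincide; write $F_1$ for the common marginal law, $\overline{F}_1=1-F_1$, and $q(a)=F_1^{-1}(1-a)$ for the upper quantile function. The probability-integral transform gives $\overline{C}(1-x_1/u,1-x_2/u)=\pr\{X_1>q(x_1/u),X_2>q(x_2/u)\}$, so everything reduces to (i) an asymptotic for the bivariate tail $\pr(X_1>s,X_2>t)$ as $s,t\to\infty$ with $t/s\to1$, and (ii) the asymptotic inverse of $\overline{F}_1$. The plan is to obtain both from a single device: conditioning on the radial variable $R$ in the pseudo-polar representation $\boldsymbol{X}\Deq R\,\boldsymbol{\Sigma}^{1/2}\boldsymbol{U}$, with $\boldsymbol{U}$ uniform on the Euclidean sphere, and invoking the Gumbel-domain convergence $\{1-F(r+s)\}/\{1-F(r)\}\to\exp\{-w(r)s\}$ from Lemma~\ref{lem:gumbel}, with $w(r)\sim\delta\beta r^{\beta-1}$, near the relevant critical radius. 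This reproduces and sharpens the elliptic tail estimates of \citet{Berman:1992} and \citet{ha2010a} while keeping track of constants.

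For the marginal, set $\boldsymbol{V}=\boldsymbol{\Sigma}^{1/2}\boldsymbol{U}$, supported on the ellipse $\|\boldsymbol{v}\|_{\boldsymbol{\Sigma}}=1$, so $\pr(X_1>u)=\int_0^\infty\pr(V_1>u/r)f_R(r){\rm d}r$. The functional $\boldsymbol{v}\mapsto v_1$ attains its maximum, equal to $1$, at a genuine critical point of its restriction to the ellipse, so $\pr(V_1>a)\sim c_1\sqrt{1-a}$ as $a\uparrow1$; substituting $r=u+s$ gives $1-u/r\sim s/u$, and the Gumbel step turns the integral into a Watson-lemma computation yielding $\overline{F}_1(u)\sim c_1'\{1-F(u)\}\{u\,w(u)\}^{-1/2}$, i.e.\ a Weibull-type tail with the same $\beta$ and polynomial index $\gamma-\beta/2$. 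Inverting $\overline{F}_1\{q(a)\}=a$ then gives $\delta\,q(1/x)^{\beta}=\log x+\frac{\gamma-\beta/2}{\beta}\log\log x+O(1)$.

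For the bivariate tail, $\pr(X_1>s,X_2>t)=\int_0^\infty\pr(V_1>s/r,V_2>t/r)f_R(r){\rm d}r$ is nonzero only for $r>r_{s,t}:=\|(s,t)^{T}\|_{\boldsymbol{\Sigma}}$, the Mahalanobis norm of the corner of $\{y_1>s,y_2>t\}$. The point $(s,t)/r_{s,t}$ on the ellipse is now not a critical point of either coordinate along the curve, so the feasible arc has length linear in the gap, $\pr(V_1>s/r,V_2>t/r)$ scaling like $(r-r_{s,t})/r_{s,t}$; the same substitution and Gumbel step give $\pr(X_1>s,X_2>t)\sim c_\rho\{1-F(r_{s,t})\}\{r_{s,t}\,w(r_{s,t})\}^{-1}$ — an extra factor $\{r\,w(r)\}^{-1/2}$ relative to the marginal, which for the bivariate Gaussian reproduces the familiar $u^{-2}$ prefactor. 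Since $r_{s,t}^{\,2}=(s^{2}-2\rho st+t^{2})/(1-\rho^{2})=\frac{2}{1+\rho}s^{2}\{1+(t/s-1)+o(t/s-1)\}$ when $t/s\to1$, one gets $\delta\,r_{s,t}^{\,\beta}=\{2/(1+\rho)\}^{\beta/2}\cdot\half\{\delta s^{\beta}+\delta t^{\beta}\}(1+o(1))$. Taking $s=q(x_1/u)$, $t=q(x_2/u)$ and using $\delta s^{\beta}=\log u-\log x_1+o(1)$, and likewise for $t$, yields $\delta\,r_{s,t}^{\,\beta}=\{2/(1+\rho)\}^{\beta/2}\{\log u-\half\log(x_1x_2)\}+o(1)$, hence $1-F(r_{s,t})\propto u^{-1/\eta}(x_1x_2)^{1/(2\eta)}$ with $\eta^{-1}=\{2/(1+\rho)\}^{\beta/2}$; the polynomial prefactors give a ratio tending to $1$ because $s/q(1/u)\to1$, which proves \eqref{eq:CRV}. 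Taking instead $x_1=x_2=1$ and $u=x$ keeps the $\log\log x$ term, $\delta\,r_{x,x}^{\,\beta}=\eta^{-1}\{\log x+\frac{\gamma-\beta/2}{\beta}\log\log x+O(1)\}$, and combining $\exp(-\delta r_{x,x}^{\,\beta})$ with the algebraic factor $r_{x,x}^{\,\gamma-\beta}$ and the constant $c_\rho$ produces $\overline{C}(1-1/x,1-1/x)\sim K_1(\log x)^{K_2}x^{-1/\eta}$ with $K_2=\{(\gamma-\beta)-(\gamma-\beta/2)/\eta\}/\beta=(1-1/\eta)\gamma/\beta+1/(2\eta)-1$, and $K_1$ the assembled product of $c_\rho$, the explicit $c_1'$, the powers of $\delta$ from the inversion and the factor $\{2/(1+\rho)\}^{(\gamma-\beta)/2}$, which simplifies to the stated expression.

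The hard part will be the uniform justification of this Laplace/arc-length step for the bivariate tail \emph{as $t/s\to1$}: one must control the interchange of the $r$-integral with the Gumbel limit and the error in the arc-length expansion simultaneously in the two diverging arguments, and that is where the precise constant $c_\rho$ — hence $K_1$ — is earned. Up to this bookkeeping the lemma follows from the elliptic tail estimates of \citet{Berman:1992} and \citet{ha2010a}; for \eqref{eq:CRV} alone a shortcut is to note that $\overline{F}_1$ is Weibull-type and appeal to the general theory of $C$-regular variation for elliptic copulas, from which the form $(x_1x_2)^{1/(2\eta)}$ is immediate once $\eta$ is pinned down by the diagonal rate.
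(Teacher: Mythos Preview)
The paper does not supply a proof of this lemma at all: it is listed among ``lemmas from the literature'' with an explicit citation to \citet{ha2010a}, and is then used as a black box in the proof of Theorem~\ref{theor:tailindep}. So there is no paper proof to compare your attempt against in the strict sense.

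That said, your sketch is essentially the argument behind the cited result. The decomposition into (i) a marginal tail expansion $\overline{F}_1(u)\sim c_1'\{1-F(u)\}\{u\,w(u)\}^{-1/2}$ via the tangency of the coordinate functional on the ellipse, and (ii) a bivariate tail expansion $\pr(X_1>s,X_2>t)\sim c_\rho\{1-F(r_{s,t})\}\{r_{s,t}\,w(r_{s,t})\}^{-1}$ via the Mahalanobis radius $r_{s,t}=\|(s,t)^T\|_{\boldsymbol{\Sigma}}$ with a linear (rather than square-root) arc-length growth at a non-critical boundary point, is exactly the Laplace/Watson analysis underlying \citet{Berman:1992} and \citet{ha2010a}. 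Your identification of $\eta=\{(1+\rho)/2\}^{\beta/2}$ through $r_{s,s}^\beta=\{2/(1+\rho)\}^{\beta/2}s^\beta$ and of $K_2$ from combining the $\log\log x$ term in the quantile inversion with the algebraic prefactor $r_{x,x}^{\gamma-\beta}$ is correct. The ``hard part'' you flag --- uniform control of the Gumbel approximation and the arc-length expansion jointly in the two diverging arguments --- is indeed where the work lies, and it is precisely what the cited reference carries out; the paper itself simply imports the conclusion.

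One minor remark: in your expansion $\delta s^\beta=\log u-\log x_1+o(1)$ for $s=q(x_1/u)$ you suppress the $\frac{\gamma-\beta/2}{\beta}\log\log u$ term that you yourself derived a few lines earlier. This is harmless for \eqref{eq:CRV} because that term cancels in the ratio, but it is needed (and you do reinstate it) for the diagonal expansion \eqref{eq:jointC}; it would be cleaner to keep it throughout and note the cancellation explicitly.
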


\begin{lemma}[Tail decay of products of Weibull-type variables; \citealp{ar2011}]\label{lem:weibprod}\ 
	If two independent random variables $R_1$ and $R_2$ have Weibull-type tails such that 
	\begin{equation*}
	\pr(R_i\geq r) \sim \alpha_{i} r^{\gamma_i} \exp(-\delta_{i} r^{\beta_i}), \quad r\to\infty, \quad i=1,2,
	\end{equation*}
	where $\alpha_{i}>0$, $\beta_i>0$, $\gamma_i\in\mathbb{R}$, $\delta_i>0$ for $i=1,2$,
	then, as $r\rightarrow\infty$, 
	\begin{equation}\label{eq:weibprod}
	\pr(R_1R_2\geq r) \sim \sqrt{2\pi} \sqrt{\dfrac{\beta_2\delta_{2}}{\beta_1+\beta_2}}\alpha_{1}\alpha_{2}A^{0.5\beta_2+\gamma_2-\gamma_1}r^{\frac{2\beta_2\gamma_1+2\beta_1\gamma_2+\beta_1\beta_2}{2(\beta_1+\beta_2)}}\exp\left(-B r^{\frac{\beta_1\beta_2}{\beta_1+\beta_2}} \right),
	\end{equation}
	where $A=\{(\beta_1\delta_{1})/(\beta_2\delta_{2})\}^{1/(\beta_1+\beta_2)}$, 
	$B=\delta_{1}^{1-a}\delta_{2}^a\{(\beta_2/\beta_1)^a+(\beta_1/\beta_2)^{1-a}\}$ and $a=\beta_1/(\beta_1+\beta_2)$.
\end{lemma}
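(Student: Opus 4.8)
\noindent\emph{Proof plan.} The idea is to write $\pr(R_1R_2\ge r)$ as a one–dimensional mixture integral and evaluate it by Laplace's (saddle–point) method. Conditioning on $R_2$ and writing $\bar F_i=1-F_i$ and $q=\beta_1\beta_2/(\beta_1+\beta_2)$,
\begin{equation*}
\pr(R_1R_2\ge r)=\int_0^\infty \bar F_1(r/s)\,\mathrm{d}F_2(s).
\end{equation*}
Differentiating the tail of $F_2$ gives the density asymptotics $f_2(s)\sim \alpha_2\delta_2\beta_2\, s^{\gamma_2+\beta_2-1}\exp(-\delta_2 s^{\beta_2})$ as $s\to\infty$, so up to a slowly varying correction the integrand behaves like $\exp\{-\delta_1(r/s)^{\beta_1}-\delta_2 s^{\beta_2}\}$ for $s$ and $r/s$ both large. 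First I would maximise this exponent over $s$; setting the derivative to zero yields the saddle $s^\star = A\,r^{a}$ with $a=\beta_1/(\beta_1+\beta_2)$ and $A=\{(\beta_1\delta_1)/(\beta_2\delta_2)\}^{1/(\beta_1+\beta_2)}$, and since $s^\star\to\infty$ both arguments are genuinely large in the dominant region, which is what licenses substituting the asymptotic forms there.

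Next I would rescale $s=s^\star v$ and factor the integrand into polynomial, constant and exponential parts. Using $a\beta_2=(1-a)\beta_1=q$, the exponential collapses to $\exp\{-r^{q}\psi(v)\}$ with $\psi(v)=c_1v^{-\beta_1}+c_2v^{\beta_2}$, $c_1=\delta_1A^{-\beta_1}$, $c_2=\delta_2A^{\beta_2}$. The defining relation $A^{\beta_1+\beta_2}=(\beta_1\delta_1)/(\beta_2\delta_2)$ is exactly equivalent to $\beta_1c_1=\beta_2c_2$, whence $\psi'(1)=0$; moreover $\psi\to\infty$ at both ends of $(0,\infty)$ and $v=1$ is the only critical point, so it is the unique global minimum, with $\psi(1)=c_1+c_2=B$ and $\psi''(1)=(\beta_1+\beta_2)\beta_1c_1>0$. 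Laplace's method then gives
\begin{equation*}
\int_0^\infty v^{\gamma_2+\beta_2-\gamma_1-1}\exp\{-r^{q}\psi(v)\}\,\mathrm{d}v\ \sim\ \Bigl\{\tfrac{2\pi}{r^{q}(\beta_1+\beta_2)\beta_1c_1}\Bigr\}^{1/2}\exp(-Br^{q}),\qquad r\to\infty,
\end{equation*}
since the polynomial factor contributes only its value $1$ at $v=1$. Multiplying back the prefactor $\alpha_1\alpha_2\delta_2\beta_2$ together with the powers of $A$ and of $r$ coming from $s^\star$, $r/s^\star$ and $r^{-q/2}$, and simplifying once more with $A^{\beta_1+\beta_2}=(\beta_1\delta_1)/(\beta_2\delta_2)$ (which turns $\delta_2\beta_2/\sqrt{\beta_1c_1}$ into $\sqrt{\beta_2\delta_2}\,A^{-(\beta_1+\beta_2)/2}$), the constant collapses to $\sqrt{2\pi}\,\{\beta_2\delta_2/(\beta_1+\beta_2)\}^{1/2}\alpha_1\alpha_2A^{\beta_2/2+\gamma_2-\gamma_1}$, the power of $r$ to $(2\beta_2\gamma_1+2\beta_1\gamma_2+\beta_1\beta_2)/\{2(\beta_1+\beta_2)\}$, and the exponential to $\exp(-Br^{q})$, which is the stated conclusion.

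The part I expect to be the real work is turning this Laplace heuristic into a rigorous argument. I would split the range of integration into $s\in[s^\star(1-\varepsilon_r),s^\star(1+\varepsilon_r)]$ and its complement, for a suitably slowly shrinking $\varepsilon_r$; on the neighbourhood, Potter–type bounds let me replace the slowly varying corrections in $\bar F_1(r/s)$ and $f_2(s)$ by their values at $v=1$ up to factors $1+o(1)$, reducing it exactly to the Laplace integral above, while on the complement $\psi(v)-B$ is bounded away from $0$ and the polynomial growth is negligible against $\exp\{-r^{q}\psi(v)\}$, so that part is $o(1)$ relative to the main term. A secondary technicality is the passage from the tail of $F_2$ to the density asymptotics of $f_2$: this is immediate when the density is eventually monotone (or of locally bounded variation), and in general can be routed through integration by parts; in every use in this paper the relevant scale variables (including the chi–distributed $R_{\boldsymbol{W}}$) have smooth densities, so this causes no difficulty. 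The algebraic bookkeeping in the second paragraph is mechanical once the identity for $A$ is applied consistently.
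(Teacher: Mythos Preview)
Your saddle-point/Laplace argument is correct: the location $s^\star=Ar^{a}$, the value $\psi(1)=B$, the curvature $\psi''(1)=(\beta_1+\beta_2)\beta_1c_1$, and the subsequent bookkeeping all check out and reproduce the stated asymptotic exactly. The caveats you flag (justifying the replacement of $\bar F_1$ and $f_2$ by their asymptotic forms on a shrinking neighbourhood of $v=1$, and controlling the complement via the strict positivity of $\psi(v)-B$ away from $v=1$) are the right ones, and the passage from tail to density asymptotics is indeed harmless for the smooth densities arising here.

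As for the comparison: the paper does not prove this lemma at all. The result is simply quoted from the literature (the citation \texttt{ar2011}, i.e.\ Arendarczyk and D\k{e}bicki, 2011), and the \texttt{proof} environment that immediately follows the lemma in the source is in fact a proof of an entirely different claim --- that the marginals $X_1=\sqrt{\sigma_{11}}\,RU_1$ of an elliptical vector inherit Weibull-type tails from $R$ via Hashorva's results on beta-scaled variables --- and has evidently been misplaced. So there is no ``paper's own proof'' of this statement to compare against; your Laplace-method derivation is precisely the approach used in the cited reference and is the standard route to this kind of product-tail result.
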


\begin{proof}
	Without loss of generality, consider $i=1$. We have $X_1\overset{D}{=} \sqrt{\sigma_{11}} R U_1$ with $U_1^2\sim \mathrm{Beta}\{1/2,(D-1)/2\}$, see \citet[Lemma 2]{ca1981}. The tail representation \eqref{eq:AIV} is stable under power transformations of variables like $R\mapsto R^{p}$, $p\in(0,\infty)$, where some of the parameters in  \eqref{eq:AIV} are different for the transformed variable $R^p$. We can fix $p=2$ and  apply Theorem 4.1 of \citet{ha2010b} to $X_1^2$ and $R^2$, which provides results concerning the random scaling with a $\mathrm{Beta}$-distributed variable. Indeed, $X_1^2$ is Weibull-tailed if $R^2$ is so, and  $X_1^2$ and $R^2$ have the same weight function $w(\cdot)$ in \eqref{eq:wgumb} and therefore the same parameter $\beta$. Switching  back to $|X_1|$ and $R$ using $p=1/2$ and taking into account the symmetry of $X_1$ with respect to $0$ then proves the statement of the lemma.
\end{proof}

\begin{proof}[Proof of Theorem~\ref{theor:cond}] 
	The formulas for conditional elliptic distributions follow from the related standard theory 
	\citep[][Corollary 5 and Section 4]{ca1981}. The expression for the density $f_{RR_{\boldsymbol{W}}}$ of the radial variable of a Gaussian scale mixture distribution is obtained as follows:
	\begin{align*}
	f_{RR_{\boldsymbol{W}}}(r) &= \dfrac{\partial}{\partial r} \int_0^\infty F(r/s)f_{R_{\boldsymbol{W}}}(s)\,\mathrm{d}s \;=\;\int_0^\infty F(1/s) \dfrac{\partial}{\partial r} f_{R_{\boldsymbol{W}}}(rs)r\,\mathrm{d}s \\
	&=\int_0^\infty  F(1/s)\left\{ f_{R_{\boldsymbol{W}}}(rs) +f'_{R_{\boldsymbol{W}}}(rs)rs\right\}\,\mathrm{d}s.
	\end{align*}
	The density  of $R$ conditional to  $\boldsymbol{X}=\boldsymbol{x}$ is based on the change of variables from $(r, \boldsymbol{w})$ to $(r, \boldsymbol{x})=(r,r \boldsymbol{w})$ in
	$f_{R, \boldsymbol{W}}=f\times \varphi_D(\boldsymbol{\cdot};\boldsymbol{\Sigma})$:
	\begin{equation*}
	f_{R, \boldsymbol{X}}(r,\boldsymbol{x})=f(r)\times \varphi_D(\boldsymbol{x}/r;\boldsymbol{\Sigma}) \times r^{-D}.
	\end{equation*}
	The formula $f_{R\mid  \boldsymbol{X}= \boldsymbol{x}}(r)=f_{R, \boldsymbol{X}}(r, \boldsymbol{x})/f_{\boldsymbol{X}}(\boldsymbol{x})$ yields the conditional density.\\ 
\end{proof}

\begin{proof}[Proof of Theorem~\ref{theor:tailindep}] \ \\
	The bivariate standard Gaussian vector $\boldsymbol{W}=(W_1,W_2)^T$ has elliptic representation $R_W (U_1, \rho U_1+\sqrt{1-\rho^2}U_2)^T$ with $R_W\indep (U_1,U_2)^T$ and $(U_1,U_2)^T$ distributed uniformly on the unit circle. In the following, parameter and variable subscripts, such as $R_W$ and $\alpha_W$, always refer to the vector $\boldsymbol{W}$. Since $R_W^2 \sim \chi^2_2 = \Gamma_{\beta=1}(a=1,b=2)$ is gamma distributed,  we transform the gamma tail decay rate from Lemma 1 in the Supplementary Materials to obtain
	\begin{equation*}
	\pr(R_W> r) = \pr(R_W^2 > r^2)  \sim (r/2)^{1-1}\{\Gamma(1)\}^{-1}\exp(-r^2/2)=   \exp(-r^2/2), \quad r\rightarrow \infty.
	\end{equation*}
	Therefore, $R_W$ has  Weibull-type tail as in \eqref{eq:AIV} with parameters $\alpha_{W}=1, \  \beta_{W}=2, \gamma_W=0,\ \delta_{W}=1/2$ in obvious notation. 
	We write $\boldsymbol{X}=R\boldsymbol{W} = R^{\star} (U_1, \rho U_1+\sqrt{1-\rho^2}U_2)^T$ with $R^{\star}=RR_W$ a product of two independent Weibull-type variables. 
	The tail expansion  for such products, given in Lemma~\ref{lem:weibprod},  yields a Weibull-type tail as in \eqref{eq:AIV} for $R^\star$. 
	Setting $R_1=R$ and $R_2=R_W$ in Lemma~\ref{lem:weibprod}, the constant in \eqref{eq:weibprod} is $A = \left\{ (\beta \delta)/(2\times 1/2)\right\}^{1/(\beta+2)} =  (\beta \delta)^{1/(2+\beta)}$ and
	\begin{equation*}\label{q:Rstartail}
	\pr(R^\star > r) \sim \alpha^\star r^{\gamma^\star} \exp(-\delta^\star r^{\beta^\star}),\quad r\to\infty,
	\end{equation*}
	with 
	\begin{align}
	\alpha^\star &=  \left(\dfrac{2\pi \beta_W \delta_W}{\beta_W+\beta}\right)^{1/2} A^{\beta_W/2+\gamma_W-\gamma} \alpha \alpha_{W} = \left(\dfrac{2\pi}{2+\beta}\right)^{1/2}  A^{1-\gamma} \alpha,\nonumber\\
	\beta^\star &= 2\beta/(2+\beta), \nonumber\\
	\gamma^\star &=  \dfrac{2\gamma+\beta}{2+\beta}, \nonumber\\
	\delta^\star &= \delta^{1-\beta/(\beta_W+\beta)}\delta_W^{\beta/(\beta_W+\beta)}\left\{ (\beta_W/\beta)^{\beta/(\beta_W+\beta)}+ (\beta/\beta_W)^{1-\beta/(\beta_W+\beta)} \right\}  \nonumber\\
	&=\delta^{2/(2+\beta)}2^{-\beta/(2+\beta)} \left\{ (2/\beta)^{\beta/(2+\beta)}+ (\beta/2)^{2/(2+\beta)} \right\}. \nonumber 
	\end{align}
	We derive $\bar{\chi}$ using Lemma \ref{lem:weibjoint}.  
	The weight function of $R^\star$ in \eqref{eq:wgumb}  is $w^\star(u)=\delta^\star\beta^\star u^{\beta^\star-1}$.  Lemma~\ref{lem:weibjoint} yields the coefficient of tail dependence of $\boldsymbol{X}$ given as $\eta=\{(1+\rho)/2\}^{\beta/(2+\beta)}$. We obtain $\bar{\chi} = 2(\eta-1)= 2\{(1+\rho)/2\}^{\beta/(\beta+2)}-1$.
	
	Lemma \ref{lem:weibjoint} helps to derive the joint tail representation \eqref{eq:AIbv}. 
	By combining  \eqref{eq:CRV} and \eqref{eq:jointC} (with $x_1=x_2=1$) and substituting $R$ in Lemma \ref{lem:weibjoint} by $R^\star$ (with its tail parameters $\alpha^\star,\beta^\star,\gamma^\star,\delta^\star$), we obtain that $\overline{C}(1-1/x,1-1/x) \sim\overline{C}(1-1/x,1-1/x)=K \left(\log x\right)^{ (1-1/\eta)\gamma^\star/\beta^\star+1/(2\eta)-1} x^{-1/\eta}$ as $x\rightarrow\infty$, with  constant term
	\begin{equation*}\label{eq:jointK}
	K=(\alpha^\star)^{-1} (1-\rho)^{-2} \{2/(1+\rho)\}^{\gamma^\star/2-\beta^\star/2+1} (1-\rho^2)^{3/2}(\delta^\star)^{(1/\eta-1)\gamma^\star/\beta^\star} \left\{(\alpha^\star)^2/(2\pi\beta^\star)\right\}^{1-1/(2\eta)},
	\end{equation*}
	which proves \eqref{eq:AIbv}.
	Finally, $\chi=0$ follows from $\bar{\chi} < 1$. 
\end{proof}

The proof of Theorem \ref{theor:taildep} is obtained using Breiman's lemma \citep{br1965}; see \citet{op2013}.



\baselineskip=14pt

\bibliographystyle{CUP}
\bibliography{Biblio}

\baselineskip 10pt

\end{document}